\documentclass[12pt]{article}


\usepackage{amsmath,amssymb,amsthm,amstext}

\usepackage{color}
\usepackage{graphicx,epsfig}
\usepackage{fullpage}

\newcommand\es{\Lambda}
\newcommand{\pair}[1]{[ #1 ]}

\renewcommand{\le}{\leqslant}
\renewcommand{\ge}{\geqslant}



\newcommand{\eps}{\varepsilon}
\newtheorem{theorem}{Theorem}

\newtheorem{lemma}{Lemma}
\theoremstyle{remark}
\newtheorem{definition}{Definition}
\newtheorem{remark}{Remark}

\title{Information disclosure 
in the framework of Kolmogorov complexity}
\author{Nikolay Vereshchagin\thanks{The results presented in Sections 4 are supported by Russian Science Foundation (20-11-20203). The results presented in Sections 3 have been supported
    by the Interdisciplinary Scientific and Educational School of Moscow University ``Brain, Cognitive Systems, Artificial Intelligence''.}\\
Moscow State University and
HSE University, Russian Federation.
}
\date{}
\begin{document}

\maketitle
\begin{abstract}
We consider the network consisting of three nodes 1, 2, 3 connected by two open channels $1\to 2$ and $1\to 3$. The information present in the node 1 consists of four strings $x,y,z,w$. The nodes  2, 3 know $x,w$ and need to know  $y,z$, respectively. We want to arrange transmission of information over the channels so that both nodes 2 and 3 learn what they need and the disclosure of information is as small as possible. By  information disclosure we mean the amount of information in the strings transmitted through channels about $x,y,z,w$ (or about $x,w$). We are also interested in whether it is possible to minimize the disclosure of information and simultaneously minimize the length of words transferred through the channels.  
\end{abstract}  

\section{Introduction}\label{s1}

Assume that a finite directed graph is given, its arcs  are called \emph{channels}.
Assume further each node is assigned two strings, called the \emph{input string} and the \emph{output string} of that node.
 The input string is understood as the information that is present in the node, and
 the  output string as the information that is needed in the node. 
The goal is to arrange the information transmission so that
every node gets what it needs. By the information transmission we mean an assignment
of strings to channels in such a way that  for every node its output string has low 
Kolmogorov complexity relative to the tuple consisting of its input string and all the strings assigned to
its input channels. ``Low'' means complexity of order
$o(n)$, where $n$ is the maximal length of all input and output strings.

Complexity of information transmission is measured by the following quantities.
First, we consider the amount of transmitted information, that is,  Kolmogorov complexity of   strings assigned to channels.
The less that complexity is the better. In this setting the problem was studied (for different networks) in the papers~\cite{shen, bglvz,muchnik, rom}.
Second, we consider information disclosure.
That is, we assume that an eavesdropper gets the tuple $T$ consisting of all the strings assigned to channels
and measure how much information has $T$ about input and output strings.

We know two papers,  \cite{much} and \cite{romzim}, devoted to information disclosure in the framework of Kolmogorov complexity.
The authors of~\cite{romzim}  considered two-parties
multi-round protocols of a random key generation 
in such a way that that the eavesdropper has no information about the generated key. Before  communication each party
has her/his own string.
The authors of~\cite{romzim} showed that
the maximal length of generated key is equal to the mutual information in those strings.

In~\cite{much}, the following problem was studied: Alice has strings
$x,y$ and Bob has only $x$ and wants to know $y$;  to this end Alice is allowed to send to Bob a string $p$; they both want the
eavesdropper to obtain as little information about $y$ as possible (see Fig.~\ref{f7}(b)).  It is not hard to show 
that any such string  $p$ (with negligible $C(y|p,x)$) has at least 
$C(y)-C(x)$ bits of information about $y$ and its complexity is at least $C(y|x)$.
It is shown in~\cite{much} that both lower bounds can be attained simultaneously, that is, there is 
a string $p$ of length $C(y|x)+O(\log n)$, that has  $\max\{C(y)-C(x),0\}+O(\log n)$ information about $y$ (where $n$ denotes the maximum of lengths of
$x,y$).

In this paper, we study the network shown on Fig.~\ref{f7}(a),
as well as networks obtained from it by identifying some input or output strings. Those networks are shown on Fig.~\ref{f7}(b), (c), (d), (e), (f).
\begin{figure}[t]
\begin{center}
\includegraphics{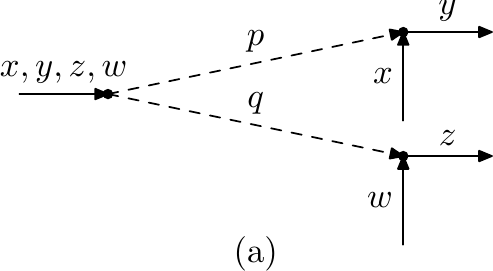}\qquad
\includegraphics{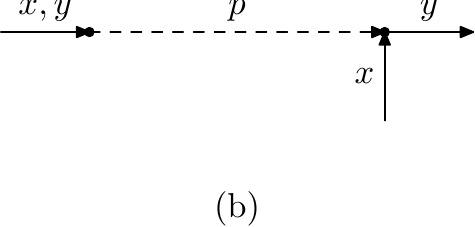}\qquad
\includegraphics{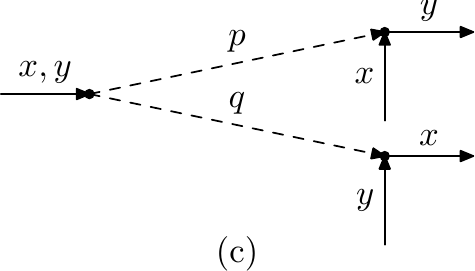}
\end{center}

\begin{center}
\includegraphics{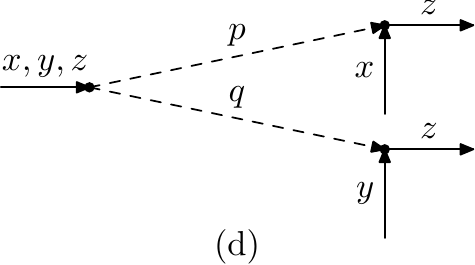}\qquad
\includegraphics{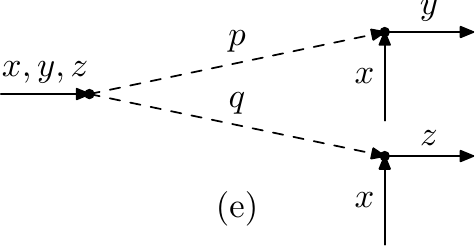}\qquad\includegraphics{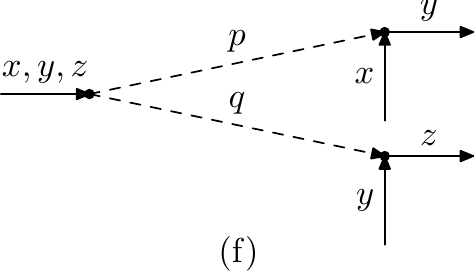}
\end{center}
\caption{The networks studied in this paper}\label{f7}
\end{figure}
Channels are shown by dashed lines. The left node is understood as the node transmitting information to two consumers (the nodes to the right).
Its input string is the tuple consisting of strings  $w,x,y,z$  and the output string is empty.
The input string of the right upper node is  $x$ and its output string is
 $y$. Similarly, the input string of the right lower node is  $w$ and its output string is
 $z$. To arrange information transmission in the network of Fig.~\ref{f7}(a), we have to find strings 
$p,q$ such that all quantities $C(p,q|x,y,z), C(y|p,x), C(z|q,w)$ are negligible.
For instance, we may let $p=y$ and $q=z$. More specifically, we say that a pair of strings
\emph{$p,q$ is $\eps$ information transmission} in this network if all quantities 
$$
C(p|w,x,y,z),
C(q|w,x,y,z), C(y|p,x), C(z|q,w)
$$ 
are less than $\eps$.

Now we will define information disclosure for this network. 
Generally, we may define information disclosure as the amount of information 
in the pair $(p,q)$ about any non-empty group of
input/output strings. In this way we can obtain 15 different notions.
To simplify things, we will consider only two groups, the group $\{x,y,z,w\}$ consisting of all input and output strings,
and the group  $\{x,w\}$ consists of all input strings that are not output strings. Strings $x,w$ constitute private information of right nodes
and may be considered more secret than strings $y,z$.
The information in $(p,q)$ about the tuple $(x,y,z,w)$ is called the 	\emph{total disclosure}
and the information in $(p,q)$ about  $(z,w)$ is called the 	\emph{private disclosure}.

Thus we will consider the following four parameters:
\begin{itemize}
\item the amounts of transmitted information, $C(p),C(q)$,
\item the total disclosure $I(p,q:w,x,y,z)$, and 
\item the private disclosure  $I(p,q:w,x)$.
\end{itemize}
We will denote the maximum of lengths of strings $x,y,z,w$ by $n$ .

It is not hard to see that the total  disclosure is equal to $C(p,q)$ with accuracy $2\eps+O(\log n)$. 
Indeed, by Commutativity of information we have
$$
I(p,q:w,x,y,z)=C(p,q)-C(p,q|w,x,y,z)+O(\log n),
$$ 
hence  
$$
C(p,q)-2\eps+O(\log n)\le I(p,q:w,x,y,z)\le C(p,q)+O(\log n).
$$ 
Besides, all lower bounds for 
$C(p,q)$ that we are able to prove hold for $I(p,q:w,x,y,z)$ as well (with accuracy  $O(\log n)$). 
On the other hand, all upper bounds for the total disclosure hold for $C(p,q)$ as well. 
Therefore, in the present paper the difference between $C(p,q)$ and $I(p,q:w,x,y,z)$
is not important, and in the sequel we will consider $C(p,q)$ instead of the total disclosure. 

One can verify that, excluding trivial cases, identification of some input or output strings in the network shown on Fig.~\ref{f7}(a) 
produces one of the networks Fig.~\ref{f7}(b,c,d,e,f).
For each of the networks shown on Fig.~\ref{f7}, we are interested in the following questions.
For given strings $w,x,y,z$, how low can be the quantities
$C(p),C(q),C(p,q)$,  
and the private disclosure for $\eps$ information transmission pairs $p,q$?
The  private disclosure is defined as $I(p,q:x,w)$ for the network of Fig.~\ref{f7}(b), 
as $I(p:x)$ for the network Fig.~\ref{f7}(b),
as $I(p,q:x,y)$ for the network Fig.~\ref{f7}(d), and
as $I(p,q:x)$ for the networks Fig.~\ref{f7}(e,f).
And the  private disclosure is 0 by definition for the network Fig.~\ref{f7}(c),
as it has no private inputs.
The second question is: 
can the minimal possible values of those quantities attained simultaneously?

In brief, the results of this paper are the following:
\begin{itemize}
\item We show that for the network of Fig.~\ref{f7}(a) some results from the paper~\cite{vermuch}
imply that the minimal possible $C(p,q)$ for 
$\eps$ information transmission cannot be expressed with accuracy $o(n+\eps)$ through complexities $w,x,y,z$, their pairs, triples and the quadruple
$(w,x,y,z)$.
The same holds for the private disclosure.
\item
For the networks of Fig.~\ref{f7}(b,c,d),
using some known results, we find the minimum values of all the four quantities and show that they all can be attained simultaneously.
\item
For the networks of Fig.~\ref{f7}(e,f), we find minimal possible values of 
all the four quantities and show that they can \emph{not} be attained simultaneously,
neither for the quantities  
 $C(p),C(q),C(p,q)$, nor for the quantities
$C(p),C(q),I(p,q:x)$. These are the main results of the paper.
\end{itemize}

In the next section we provide the main definitions and results on Kolmogorov complexity.
Section~\ref{s3} contains the definitions related to information transmission in networks and previous results.
Section~\ref{s4} contains the main results and their proofs.

\section{Preliminaries}\label{s2}

We write \emph{string} to denote a  finite  binary  string.  Other
finite objects, such  as  pairs of  strings,  may be encoded  into
strings  in  natural ways. The set  of  all strings is denoted  by
$\{0,1\}^*$ and the length of a string $x$ is denoted by $|x|$. The empty string is denoted by $\es$.

Let $\bar x$ denote the string
$$
\underbrace{000\dots0}_{|x|\text{ times}}1x=0^{|x|}1x.
$$
We shall use the string $\bar xy$ to encode the pair
$(x,y)$ of strings; the notation $[x,y]$ will mean  the
same as $\bar xy$. 
Let $\log n$ denote the binary logarithm of $n$.

A \emph{programming language} is a partial computable function
$F$ from $\{0,1\}^*\times\{0,1\}^*$ to $\{0,1\}^*$.
The first argument of $F$ is called a program,
the second argument is called the input, and
$F(p,x)$ is called the output of program $p$ on input $x$.
A programming language $U$ is called \emph{optimal}
if for any other
programming language $F$ there exists a string $t_F$
such that $U(t_Fp,x)=F(p,x)$ for  all $p,x$.
By Solomonoff -- Kolmogorov theorem (see e.g. \cite{lv,SUV})
optimal programming languages
exist. We fix some optimal programming language $U$ and define
\begin{itemize}
\item $C(x|y)=\min\{|p|\mid U(p,y)=x\}$ (conditional Kolmogorov
complexity
of $x$ relative to $y$),
\item $C(x)=C(x|\es)$ (Kolmogorov complexity
of $x$),
\item  $I(x:y)=C(y)-C(y|x)$ (information in $x$ about $y$),
\item  $I(x:y|z)=C(y|z)-C(y|[x,z])$ (information in $x$ about $y$ relative to $z$),
\item $J(x:y)=C(x)+C(y)-C(\pair{x,y})$, $J(x:y|z)=C(x|z)+C(y|z)-C(\pair{x,y}|z)$ (mutual information
between  $x$ and $y$).
\item We say that strings $x$ and  $y$
are \emph{independent} if $J(x:y)$ is close to 0.
\end{itemize}
If $U(p,y)=x$, we say that $p$ is \emph{a  program for $x$ relative to $y$}.
If $U(p,\es)=x$, we say that \emph{$p$ is a program for $x$}.


Instead of  $C(\pair{x,y})$, $I(x:[y,z])$, $C(z|\pair{x,y})$, and  $C(\pair{x,y}|z)$ we shall
write  $C(x,y)$,  $I(x:y,z)$, $C(z|x,y)$, 
and  $C(x,y|z)$ respectively.
We use the following well known facts
(see~\cite{lv,SUV}):
\begin{itemize}
\item $C(x)\le |x|+O(1)$;
\item $C(x|y)\le C(x)+O(1)$;
\item $C(x|y,z)\le C(x|z)+O(1)$;
\item
for any partial computable function $f(x)$ there is a constant $c$ such that
$C(f(x)) \le C(x)+c$ for all $x$ in the domain of $f$;
\item \textbf{Chain rules}:
\begin{align*}
C(x,y)& = C(x)+C(y|x)+O(\log C(x,y)),\\
C(x,y|z)& = C(x|z)+C(y|x,z)+O(\log C(x,y|z)),\\
C(x,y:u|z)& = C(x:u|z)+C(y:u|x,z)+O(\log C(x,y,u|z))
\end{align*}
for all $x,y,z,u$;
\item \textbf{Commutativity of information}:
$$
I(x:y)=J(x:y)+O(\log C(x,y))=I(y:x)+O(\log C(x,y)),
$$
\item \emph{the upper graph} $\{(x,y,i)\mid C(x|y)\le i\}$ of conditional Kolmogorov complexity
is computably enumerable.
\end{itemize}

\section{Information transmission in networks: definitions and previous results}\label{s3}

\subsection{Main definitions}

\begin{definition}
We say that a pair of strings 
\emph{$(p,q)$ is an $\eps$ information transmission} in the network of Fig.~\ref{f7}(a) if all the quantities 
$$
C(p|w,x,y,z),
C(q|w,x,y,z), C(y|p,x), C(z|q,w)
$$ 
are less than $\eps$.
In a similar way we define $\eps$ information transmission in other networks.
\end{definition}

\begin{definition}
Let $f(x,y,z,w)$ and $g(p,q,x,y,w,z)$ be some integer valued functions.
We say that $f$ is an \emph{upper bound for $g$} in the network of Fig.~\ref{f7}(a)
if for all strings $x,y,z,w$ of length at most $n$ there is a pair of strings $p,q$ of length $O(n)$ that is
$o(n)$ information transmission in the network and $g(p,q,x,y,w,z)\le f(x,y,z,w)+o(n)$.
(For example,  $C(y|x)$
is an upper bound for  $C(p)$ in the network of Fig.~\ref{f7}(a).)

We say that  $f$ is 
a \emph{lower bound for  $g$} in the network of Fig.~\ref{f7}(a)
if 
for all strings    $x,y,z,w$ and for all  $\eps$ information transmission pairs $(p,q)$,
it holds 
$$g(p,q,x,y,w,z)\ge f(x,y,z,w)-c\eps-d\log n-e,$$
where $c,d$ are some absolute constants,
$n$ is the maximal length of strings 
$x,y,z,w,p,q$,
 and the constant $e$
depends on the choice of the optimal programming language in the definition of Kolmogorov complexity.
(For instance,  $C(y|x)$
is a lower bound for $C(p)$ in the network of Fig.~\ref{f7}(a).)

If a function $f$ is both an upper and lower bound 
for  $g$,
then we say that \emph{the minimal possible value of 
$g$  is} 
$f$. (For example, the minimal possible complexity of $p$
in the network of Fig.~\ref{f7}(a) is $C(y|x)$.)
In a similar way we define the notions of upper and lower bounds for other networks.
\end{definition}

\begin{definition}
\emph{The profile of a quadruple of strings $x,y,z,w$} 
is defined as the tuple consisting of 15 numbers
$$
C(x),C(y),C(z),C(w), C(x,y),C(x,z),C(x,w), \dots, C(x,y,z,w) 
$$ 
We call a function of  $x,y,z,w$, a \emph{profile function} if it is a function of the profile of $x,y,z,w$.
If there is no profile function $f$ of $x,y,z,w$ that is the minimal possible value 
of  $g$ for a network, then we say that \emph{the minimal possible value of  
$g$  for that network is not a profile function}. 
 \end{definition}

\subsection{The network of Fig.~\ref{f7}(a)} 

The minimal possible values of $C(p),C(q)$ in all the networks of Fig.~\ref{f7} can be found quite easily.
Namely,  the minimal possible  $C(p)$ or $C(q)$ is equal to Kolmogorov complexity of the output string
relative to the input string in the node where the corresponding arrow directs to. 
For example, in the network of Fig.~\ref{f7}(a), the minimal possible $C(p)$ is equal to  $C(y|x)$ and the minimal $C(q)$ is 
$C(z|w)$. Both minimal values can be attained simultaneously.

More specifically, if a pair $(p,q)$ is $\eps$ information transmission, then
\begin{align}
&C(p)\ge C(y|x)-\eps-O(\log n),\label{eq13}\\
&C(q)\ge C(z|w)-\eps-O(\log n).\label{eq14}
\end{align}
This can be proved via cuts. The cut technique works as follows.
\begin{quote}
We choose a set of nodes, called  \emph{a cut}. Let  
$A$ denote the tuple consisting of all input strings of the nodes of the cut and let
$B$  denote the tuple consisting of all output strings of the nodes of the cut. Finally, let $P$
denote the tuple consisting of labels of channels directed to the nodes of the cut. 
Then $C(B|A)$ is a lower bound for $C(P)$ and even for $I(P:A,B)$ (hence for the total disclosure).
\end{quote}
More specifically, we have the following
\begin{lemma}\label{l3}
For all strings $A,B,P$  of length at most  $n$ we have 
\begin{align}\label{eq33}
&C(B|A)\le I(P:(A,B))+C(B|P,A)+O(\log n).
\end{align}
\end{lemma}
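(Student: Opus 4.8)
The plan is to reduce the inequality to the standard chain rule together with commutativity of information; no property of the network is needed, since the statement is a pure inequality about the three tuples $A,B,P$. First I would rewrite the mutual information appearing on the right-hand side by commutativity of information, obtaining $I(P:(A,B)) = C(A,B) - C(A,B|P) + O(\log n)$. Next I would expand both complexities on the right by the chain rule,
$$C(A,B) = C(A) + C(B|A) + O(\log n), \qquad C(A,B|P) = C(A|P) + C(B|A,P) + O(\log n).$$
Substituting these into the previous identity yields
$$I(P:(A,B)) = C(B|A) + \bigl(C(A)-C(A|P)\bigr) - C(B|A,P) + O(\log n).$$

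Then I would move $C(B|A,P)$ to the left and note that the leftover term $C(A)-C(A|P)$ equals $I(P:A)$, which is nonnegative up to an additive constant because $C(A|P)\le C(A)+O(1)$. Keeping this nonnegative term only enlarges the right-hand side, so rearranging gives
$$I(P:(A,B)) + C(B|A,P) = C(B|A) + I(P:A) + O(\log n) \ge C(B|A) - O(\log n),$$
which is exactly the claimed bound $C(B|A)\le I(P:(A,B))+C(B|P,A)+O(\log n)$.

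Viewed this way, the inequality is really an identity up to the term $I(P:A)$ that we discard, so I do not expect any genuine obstacle: the whole argument is an algebraic manipulation of the chain rule and commutativity. The only two points that require care are (i) bookkeeping the logarithmic error terms — each use of the chain rule and of commutativity contributes an $O(\log C(\cdot))$, and since all of $A,B,P$ have length at most $n$ every complexity in sight is $O(n)$, so each such term is indeed $O(\log n)$; and (ii) checking that the sign of the discarded term $C(A)-C(A|P)$ is correct, so that dropping it preserves the direction of the inequality rather than reversing it.
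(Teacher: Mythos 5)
Your proof is correct and is essentially the paper's own argument: both reduce the claim to the decomposition $I(P:A,B)=I(P:A)+I(P:B|A)+O(\log n)$ and then discard the nonnegative term $I(P:A)$; the paper invokes the chain rule for mutual information directly, while you unfold it one level into the chain rule for complexities. (Minor remark: with the paper's definition $I(x:y)=C(y)-C(y|x)$, the identity $I(P:(A,B))=C(A,B)-C(A,B|P)$ holds exactly, so your appeal to commutativity there is superfluous, though harmless.)
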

 \begin{proof}
Move $C(B|P,A)$ to the left hand side of the inequality~\eqref{eq33}.
Then the left hand side becomes
$ I(P:B|A)$. 
By the Chain rule the right hand side equals 
 $I(P:A)+I(P:B|A)$ (ignoring  $O(\log n)$ terms), and hence is larger than the left hand side.
\end{proof}

Since $C(B|P,A)=O(\eps)$, Lemma~\ref{l3} implies that  $C(B|A)$ is indeed a lower bound for $I(P:A,B)$.
Plugging in Lemma~\ref{l3}
$A=x,B=y,P=p$ and $A=w,B=z,P=q$, we get the inequalities  \eqref{eq13} and \eqref{eq14}, respectively.

For some networks, cuts yield even better lower bounds for $I(P:(A,B))$ than $C(B|A)$.
More specifically, sometimes we can remove some input strings from 
$A$ keeping $C(B|P,A)=O(\eps)$. 
This happens when some string belongs both to $A$ and $B$. 
For example, consider the cut consisting of both right nodes in the network shown on Fig.~\ref{f7}(f). Then we may exclude 
the string $y$  from $A$, since it can be obtained from $p$ and $x$ and some extra $\eps$ bits. Hence we have  
$C(y,z|p,q,x)<2\eps+O(1)$ and thus $C(y,z|x)$ is a lower bound for the total disclosure in that network.
Another example: consider the cut consisting of both right nodes in the network of Fig.~\ref{f7}(c). Then we can exclude $x$ or $y$ from  $A$ (but cannot
exclude both $x,y$). In this way we can show
that both  $C(x,y|x)$ and $C(x,y|y)$ are lower bounds for the total disclosure in that network.

Obviously, $C(y|x)$ and $C(z|w)$ are also upper bounds for  $C(p)$ and $C(q)$, respectively 
(for the network of Fig.~\ref{f7}(a), and similar upper bounds hold for other networks from Fig.~\ref{f7}). 
Namely, we can let $p$ be the minimum length program for $y$ relative to $x$ (its length equals $C(y|x)$). 
Similarly, we can let $q$ be the minimum length program for  $z$ relative to $w$.

The minimal  $C(p,q)$ for the network Fig.~\ref{f7}(a) was studied in the paper~\cite{vermuch}. It was shown that
the minimal  $C(p,q)$ for this network is not a profile function of $x,y,z,w$. Hence
the minimal total disclosure is not a profile function of  $x,y,z,w$. 

More specifically, the following function 
$$
f(x,y,z,w)=\min\{C(p,q)\mid  y=U(p,x), z=U(q,w)\} 
$$
was studied  in~\cite{vermuch}.
Here $U$ is the optimal programming language.
Notice that $f(x,y,z,w)$ is different from
the minimal complexity of a pair that is $\eps$ information transmission in this network.
The difference is in that in the definition of $f(x,y,z,w)$ we do not require  $C(p|x,y,z,w),C(q|x,y,z,w)$ be negligible.
Another difference is that we replace the requirements of negligibility of $C(y|p,x),C(z|q,y)$ by stronger requirements
$y=U(p,x)$, $z=U(q,w)$. However, this second difference is not important, since the inequalities  $C(y|p,x),C(z|q,y)<\eps$ imply
that adding to  $p,q$ some extra 
$\eps$ bits we can obtain $p',q'$ with $y=U(p',x)$, $z=U(q',w)$. 
However, the first difference is important.

In~\cite{vermuch}, for every natural number
$n$ two quadruples of strings
$x,y,z,w$ and $x',y',z',w'$ of length $O(n)$ were defined.
The profiles of those quadruples coincide with 
accuracy $O(\log n)$ in all components and
$f(x,y,z,w)\le n$ but $f(x',y',z',w')\ge 1.5n$ (with accuracy $O(\log n)$).
Besides, the profiles of those tuples has the following feature:
$C(y,z|x,w)=O(\log n)$. Hence for all  $p,q$ the total disclosure equals the private disclosure with accuracy $O(\log n)$
(for both quadruples).  The second feature is that the upper bound for $C(p,q)$ for the first quadruple 
is attained for a pair $p,q$ with  $C(p,q|x,y,z,w)=O(1)$. Therefore, for the first quadruple both the total and private disclosures are at most
 $n$ while for the second quadruple both are at least  $1.5n$ (with accuracy $O(\log n)$).
Hence both minimal total and private disclosures are not profile functions for this 
network.


 \subsection{The network Fig.~\ref{f7}(c) and information distance} 

We skip the network of Fig.~\ref{f7}(b) as an obvious one and proceed to the network of Fig.~\ref{f7}(c).
Private disclosure is zero for this network, independently of the choice of $p,q$, since there are no private inputs. 

The minimal $C(p,q)$ for this network was essentially found 
in the paper~\cite{bglvz}. It was called  
  \emph{the information distance between $x$ and $y$}.
More precisely, the information distance between $x$ and $y$ is defined as $\min \{C(p,q)\mid U(p,x)=y, U(q,y)=x\}$,
where $U$ stands for the optimal programming language.
In~\cite{bglvz}, it was proved that the information distance between $x$ and $y$ equals  $\max\{C(y|x),C(x|y)\}$ 
with accuracy $O(\log\max\{C(y|x),C(x|y)\})$.
It follows from the proof, that the pair $(p,q)$  witnessing this equality has logarithmic complexity relative to
the pair  $x,y$, hence  $\max\{C(y|x),C(x|y)\}$
is an upper bound for $C(p,q)$ for this network.
On the other hand, $\max\{C(y|x),C(x|y)\}$ is a lower bound as well, which can be shown using
the cut consisting of both right nodes (or with two cuts, consisting of one right nodes each). 

The results of~\cite{bglvz} leave open the following question,
can the minimum values of $$C(p),C(q),C(p,q)$$ be attained simultaneously?  The positive answer 
to this question follows from An. Muchnik's theorem from the next section,
which is devoted to the network of Fig.~\ref{f7}(d). It turns out that the network of Fig.~\ref{f7}(c)
is a special case of the network of Fig.~\ref{f7}(d), namely, 
for $z=[x,y]$ the framework Fig.~\ref{f7}(d) essentially turns into the network of Fig.~\ref{f7}(c).


\section{Main results}
\label{s4}

 \subsection{The network of Fig.~\ref{f7}(d) and An. Muchnik's theorem} \label{muchnik} 
  
Basically this network was studied in the paper~\cite{muchnik},
\begin{figure}[t]
\begin{center}
\includegraphics{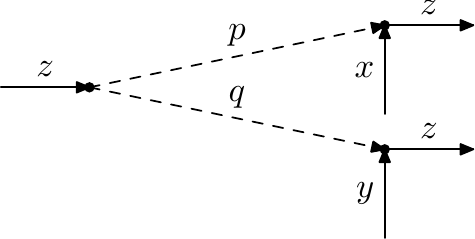}\qquad
\end{center}
\caption{The network studied in~\cite{muchnik}}\label{f6}
\end{figure}
along with the network shown on Fig.~\ref{f6} (the difference is in that the input to the left node equals $z$, and not  $(x,y,z)$). 
Using cuts we can show that  $C(z|x)$ is a lower bound for $C(p)$ and $C(z|y)$ is a lower bound for $C(q)$,
and $\max\{C(z|x),C(z|y)\}$ is a lower bound for the total disclosure (for both networks). 
These lower bounds can be attained, and moreover, they can be attained simultaneously. More specifically, 
the following holds.

\begin{theorem}[\cite{muchnik}]\label{th2}
For all $n$ 
for all strings $x,y,z$ of length at most $n$
there exist consistent strings $p,q$ (which means that one of the string
is a prefix of the other one) with
\begin{align*}
|p|&= C(z|x), \\
|q|&= C(z|y), \\
C(z|x,p)&=O(\log n), \\
C(z|y,p)&=O(\log n) ,\\ 
C(p,q|z)&=O(\log n).
\end{align*}
\end{theorem}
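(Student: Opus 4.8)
The plan is to produce both programs as prefixes of one string, so that the required consistency holds by construction. Assume without loss of generality that $k:=C(z|x)\ge l:=C(z|y)$, so that $q$ should be the length-$l$ prefix of $p$ and $|p|=k$, $|q|=l$. It then suffices to build a single string $p\in\{0,1\}^k$ satisfying three conditions: (i) $z$ is recoverable from $x$ and $p$ up to $O(\log n)$ bits, i.e. $C(z|x,p)=O(\log n)$; (ii) $z$ is recoverable from $y$ and the length-$l$ prefix $q$ of $p$, i.e. $C(z|y,q)=O(\log n)$; and (iii) $C(p|z)=O(\log n)$. Consistency is then automatic, and since $q$ is a prefix of $p$ we also get $C(z|y,p)\le C(z|y,q)+C(q|y,p)+O(\log n)=O(\log n)$ (because $q$ is computable from $p$ and $l\le n$), which covers the conditional-complexity conditions as stated, while $C(p,q|z)=O(\log n)$ follows from (iii) once $q$ is read off from $p$.

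The main tool I would use is a fixed bipartite \emph{expander} $G$ whose left part is the set of targets $\{0,1\}^{\le n}$, whose right part is the set of candidate programs $\{0,1\}^{k}$, and whose left-degree is $\mathrm{poly}(n)$. Bounded left-degree is what buys simplicity: any neighbour $p$ of $z$ in $G$ is one of $\mathrm{poly}(n)$ strings singled out by $z$, so specifying which neighbour costs only $\log\mathrm{poly}(n)+C(G)=O(\log n)$ bits given $z$, yielding (iii) \emph{independently} of $x$. Crucially, the expansion property is a finite decidable condition (it quantifies over all subsets of a fixed size, of which there are finitely many), so the lexicographically first good graph is computable from $n$ and contributes only $C(G)=O(\log n)$. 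Good expansion guarantees, for every set $S$ of size at most $2^{k+1}$, an assignment $z\mapsto p(z)\in N_G(z)$ under which every right vertex receives only $O(1)$ vertices of $S$; applied to $S_x=\{z'\mid C(z'|x)\le k\}$ this gives (i), since from $x$ one enumerates $S_x$, replays the assignment, and resolves the $O(1)$ collisions on $p$ with $O(\log n)$ extra bits.

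For both sources simultaneously I would impose on the \emph{same} fine-level graph $G$ a two-level guarantee: for every pair of sets $S_x$ (size $\le 2^{k+1}$) and $S_y$ (size $\le 2^{l+1}$) and every $z\in S_x\cap S_y$, some neighbour $p\in N_G(z)$ is load-balanced for $S_x$ under the length-$k$ map and whose length-$l$ prefix is load-balanced for $S_y$ under the length-$l$ map. Existence of such a $G$ with $\mathrm{poly}(n)$ degree is proved by the probabilistic method; because the randomness sits in the graph and each bad event carries a factor $D=\mathrm{poly}(n)$ in the exponent, the union bound over all relevant subset pairs succeeds, unlike for a fixed random hash. Choosing such a $p$ for $z=z$ with $S_x,S_y$ the two enumerable decoding sets, and setting $q$ equal to its length-$l$ prefix, gives consistency and both recoverabilities at once.

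The main obstacle is twofold. First, one cannot simply take a random fixed labelling whose prefixes serve as programs: the low-collision property is only needed for the enumerable sets $S_x,S_y$, but those are undecidable, so the resulting labelling would not be computable from $z$ and $C(p|z)$ would be uncontrolled; conversely, demanding the property for all subsets (to regain computability) destroys a naive union bound. Reconciling computability with a logarithmic — rather than linear — overhead is exactly why the bounded-degree expander, together with an \emph{online} version of the load-balancing assignment (needed because $S_x,S_y$ are only enumerated, never known to be complete), is unavoidable. Second, the nesting requirement that $q$ be a genuine prefix of $p$ is precisely the feature that a single application of the one-source Muchnik theorem does not provide, which is why the two assignments must be coordinated through one fine-level object rather than obtained as a black box.
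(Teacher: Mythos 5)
First, a point of reference: the paper does not prove Theorem~\ref{th2} at all --- it is quoted from~\cite{muchnik} --- so your attempt has to be measured against the proof in that reference and its later expositions. Your overall architecture (a bipartite graph of polynomial left degree, computable from the parameters $n,k,l$ so that a neighbour $p$ of $z$ has $C(p|z)=O(\log n)$; consistency obtained by reading the short fingerprint off as a prefix of the long one; decoding by enumerating $S_x$ plus $O(\log n)$ bits of advice) is indeed the architecture of Muchnik's proof. But the combinatorial claim you rest everything on is a genuine gap. You demand a graph property universally quantified over \emph{all} pairs of subsets $S_x$ (size $\le 2^{k+1}$) and $S_y$ (size $\le 2^{l+1}$): every $z\in S_x\cap S_y$ has a neighbour load-balanced for both. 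The union bound you invoke cannot establish this: the bad event for a fixed $(S_x,S_y,z)$ is determined by the $D$ random edges at $z$, so its probability is at least $2^{-Dk}=2^{-\mathrm{poly}(n)}$, while the number of subset pairs is of order $2^{(n-k)2^{k+1}}$, doubly exponential; no polynomial $D$ can beat that. Worse, the property is simply \emph{false} for every polynomial-degree graph in the main regime $\Theta(\log n)\le k\le n-\Theta(\log n)$: for any $z$ all of whose neighbours have at least $T+1$ left-neighbours (an edge count shows this holds for all but $T2^k\ll 2^{n+1}$ left vertices), an adversary can place $(T+1)D\le 2^{k+1}$ such left-neighbours into $S_x$ and overload every neighbour of $z$. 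So no ``good for all subsets'' graph can exist, and the tension you correctly identified (enumerable sets vs.\ computability of the labelling) cannot be resolved this way.

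What the cited proof does instead --- and what is missing from your sketch --- is to prove a property quantified over sets of one \emph{fixed small size} only (expansion, or in the extractor form: few $z$ have a constant fraction of neighbours inside any fixed small set of right vertices), for which the union bound does succeed with polynomial degree because the failure event forces all $\approx 2^{k'}D$ edges of a set into a small target. Then, for the particular enumerable sets $S_x,S_y$, one argues that the left vertices having no good neighbour form a set that is \emph{small and enumerable from $x$ alone} (respectively from $y$ alone); if $z$ were such a vertex, it would satisfy $C(z|x)<k$ (resp.\ $C(z|y)<l$), contradicting the definitions of $k$ and $l$. This use of the near-maximal complexity of $z$ inside $S_x$ and $S_y$ --- your proposal uses only membership --- is the crux of the theorem, and it also dissolves your ``online matching'' worry: no online assignment is needed, because the decoder can output the $i$-th element of $S_x$ that appears adjacent to $p$ in enumeration order, with $i=O(\log n)$ bits of advice, once the final crowd of $p$ is known to be polynomial (crowds grow monotonically under enumeration). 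Finally, for the prefix version it matters that the bad sets for the fine and coarse levels are handled \emph{separately}: one needs the ``most neighbours are good'' (extractor/sampler) form rather than ``some neighbour is good,'' so that goodness for $x$ and goodness for $y$ can be intersected seed-by-seed; a combined bad set enumerable only from the pair $(x,y)$ would prove nothing, since $C(z|x,y)$ may be far below both $k$ and $l$.
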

Note that consistency of  $p,q$ implies that  $C(p,q)=\max\{C(z|x),C(z|y)\}+O(\log n)$,
that is,  $C(p,q)$ attains its minimum.

Finally, the next theorem states the minimum private disclosure $I(p,q:x,y)$
for this network is  $\max\{ I(x:z|y), I(y:z|x)\}$ and that it is attained for the pair $p,q$
from Theorem~\ref{th2}. 

\begin{theorem}\label{th3}
(1) If $p,q$ is  $\eps$ information transmission in this circuit,
then 
$$
I(p,q:x,y)\ge \max\{ I(z:y|x), I(z:x|y)\}-O(\log n+\eps).
$$
(2) For all  $y,x,z,p,q$ satisfying the conditions and the statement of Theorem~\ref{th2}, it holds that
\begin{align*}
I(p,q:x,y)&= \max\{ I(y:z|x), I(x:z|y)\}+O(\log n).
\end{align*}
Here $n$ stands for the maximal length of $x,y,z,p,q$. 
\end{theorem}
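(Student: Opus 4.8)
The plan is to prove the two parts separately, since they use different tools. For the lower bound~(1) I would pass from $I(p,q:x,y)$ to a conditional information via the chain rule and then exploit the defining property of $\eps$ information transmission, namely that $(p,q,x)$ computes $z$ (and symmetrically $(p,q,y)$ computes $z$). For the attainability~(2) the idea is to use Commutativity of information to rewrite $I(p,q:x,y)=C(p,q)-C(p,q|x,y)+O(\log n)$, evaluate each term using the quantitative guarantees of Theorem~\ref{th2}, and finally recognize the resulting profile expression as the claimed maximum of conditional informations.

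For part~(1): Combining the unconditional and conditional chain rules of Section~\ref{s2} gives $I(p,q:x,y)=I(p,q:x)+I(p,q:y|x)+O(\log n)$, and since mutual information is nonnegative up to $O(\log n)$ this yields $I(p,q:x,y)\ge I(p,q:y|x)-O(\log n)$. Now $I(p,q:y|x)=C(y|x)-C(y|p,q,x)$, so it remains to bound $C(y|p,q,x)$ from above by $C(y|z,x)$. Because $(p,q)$ is $\eps$ information transmission, $C(z|p,q,x)\le C(z|p,x)+O(1)<\eps+O(1)$; producing $z$ from $(p,q,x)$ and applying the chain rule then gives $C(y|p,q,x)\le C(y|z,x)+\eps+O(\log n)$. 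Hence $I(p,q:y|x)\ge I(z:y|x)-\eps-O(\log n)$, so that $I(p,q:x,y)\ge I(z:y|x)-\eps-O(\log n)$. Running the symmetric computation with $x,p$ and $y,q$ interchanged, using $C(z|q,y)<\eps$, gives the same bound with $I(z:x|y)$, and taking the maximum proves~(1).

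For part~(2): The lower bound is already furnished by~(1) applied with $\eps=O(\log n)$, so only the matching upper bound is needed. First, as already noted after Theorem~\ref{th2}, consistency of $p,q$ together with $|p|=C(z|x)$ and $|q|=C(z|y)$ gives $C(p,q)=\max\{C(z|x),C(z|y)\}+O(\log n)$. Next I would establish the identity $C(p,q|x,y)=C(z|x,y)+O(\log n)$. The inequality $\le$ follows from $C(p,q|z)=O(\log n)$ by first producing $z$ from $(x,y)$ and then $(p,q)$ from $z$; the inequality $\ge$ follows from $C(z|p,q,x)\le C(z|x,p)+O(1)=O(\log n)$ by first producing $(p,q)$ from $(x,y)$ and then $z$. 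Commutativity of information now gives $I(p,q:x,y)=C(p,q)-C(p,q|x,y)+O(\log n)=\max\{C(z|x),C(z|y)\}-C(z|x,y)+O(\log n)$. Finally, unfolding the definitions, $I(y:z|x)=C(z|x)-C(z|x,y)$ and $I(x:z|y)=C(z|y)-C(z|x,y)$, so the last expression is exactly $\max\{I(y:z|x),I(x:z|y)\}+O(\log n)$, as required.

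The hard part will be the two-sided identity $C(p,q|x,y)=C(z|x,y)+O(\log n)$ in part~(2): it asserts that, relative to $(x,y)$, the transmitted pair $(p,q)$ and the common target $z$ are informationally equivalent up to logarithmic terms, and this is precisely where the quantitative guarantees of Theorem~\ref{th2} (that $z$ is recoverable from each channel message together with the corresponding input, and that $(p,q)$ has only logarithmic complexity given $z$) are used together. The only delicate bookkeeping is to pick the correct guarantee for each direction and to keep all chain-rule error terms at $O(\log n)$; everything else reduces to the standard chain rules and Commutativity of information recalled in Section~\ref{s2}.
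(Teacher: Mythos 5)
Your proof is correct, and while part~(1) is essentially the paper's argument in different packaging, your part~(2) takes a genuinely different route. For~(1), the paper isolates the key step as Lemma~\ref{l1}, namely $I(z:x|y)\le I(q:x,y)+C(z|q,y)+O(\log n)$, and then feeds in $C(z|q,y)<\eps$; your chain-rule derivation $I(p,q:x,y)\ge I(p,q:x|y)-O(\log n)\ge I(z:x|y)-\eps-O(\log n)$ is the same computation done inline, so there is no real difference there. For~(2), however, the paper argues asymmetrically: w.l.o.g.\ $C(z|x)\le C(z|y)$, so by consistency the pair reduces to $q$, and then it invokes the \emph{exact difference} form of Lemma~\ref{l1}, checking that the three slack terms $I(q:y)$, $C(z|q,x,y)$, $I(q:x|y,z)$ are each $O(\log n)$ --- which requires the nontrivial fact (relegated to a footnote) that a minimum-length program $q$ for $z$ relative to $y$ carries no information about $y$, i.e.\ $C(z|y)\le C(q)-I(y:q)+C(z|q,y)+O(\log n)$. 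You instead stay symmetric: Commutativity gives $I(p,q:x,y)=C(p,q)-C(p,q|x,y)+O(\log n)$, consistency gives $C(p,q)=\max\{C(z|x),C(z|y)\}+O(\log n)$, and the two-sided identity $C(p,q|x,y)=C(z|x,y)+O(\log n)$ (one direction from $C(p,q|z)=O(\log n)$, the other from $C(z|p,x)=O(\log n)$) finishes the computation, with $\max\{C(z|x),C(z|y)\}-C(z|x,y)=\max\{I(y:z|x),I(x:z|y)\}$ by definition. Your route avoids both the case analysis and the ``minimal program is independent of its condition'' footnote, and it delivers the equality directly rather than as matching one-sided bounds; what the paper's route buys in exchange is the precise decomposition of the slack in Lemma~\ref{l1}, which it reuses later for the lower bound on private disclosure in the network of Fig.~\ref{f7}(f). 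One small remark: your invocation of part~(1) for the lower bound in~(2) is legitimate (the pair from Theorem~\ref{th2} is an $O(\log n)$ information transmission since $C(p,q|z)=O(\log n)$ bounds $C(p|x,y,z)$ and $C(q|x,y,z)$), but it is not even needed, since your identity already gives the equality.
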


\begin{proof}
(1) We will prove the inequality
$$
I(p,q:x,y)\ge I(z:x|y)-O(\log n)-\eps,
$$
and the other inequality can be proved in a similar way. 
This can be done using the cut consisting of the right lower node.
However this time in place of Lemma~\ref{l3} we will use Lemma~\ref{l1} (presented in the end of the proof).
By Lemma~\ref{l1} we have
$$
I(z:x|y)\le I(q:x,y)+\eps+O(\log n),
$$
therefore,
$$
I(p,q:x,y)\ge I(q:x,y)\ge I(z:x|y)-\eps-O(\log n).
$$

(2)
W.l.o.g. assume that  $C(z|x)\le C(z|y)$. Then the pair $p,q$ basically coincides with $q$.
Since 
 $$
 I(y:z|x)=C(z|x)-C(z|x,y),\quad I(x:z|y)=C(z|y)-C(z|x,y),
 $$
our assumption implies that 
   $$
   I(y:z|x)\le  I(x:z|y).
   $$
Thus we have to prove that   
\begin{align*}
I(q:x,y)&=  I(x:z|y)+O(\log n).
\end{align*} 
As  $C(z|y,q)=O(\log n)$,
by Lemma~\ref{l1} (the second part)
the sought equality means that   
$$
I(q:y)+C(z|q,x,y)+I(q:x|y,z)=O(\log n).
$$   
The first term is of order  $O(\log n)$,
since $q$ is a minimum length program for $z$ relative to $y$,
and hence has no information about $y$.\footnote{Formally,
we use here the inequality $C(z|y)\le C(q)-I(y:q)+C(z|q,y)+O(\log n)$
that holds for all strings $q,y,z$  of length at most $n$.
} 
The second term is of order
$O(\log n)$, since even the larger quantity $C(z|q,y)$ is of that order by assumption.
Finally,  $I(q:x|y,z)=O(\log n)$, since by Theorem~\ref{th2} we have $C(q|z)=O(\log n)$.
\end{proof}

\begin{lemma}\label{l1} 
For all strings $y,z,x,q$ of length at most $n$ it holds
\begin{align} \label{eq6}
I(z:x|y)\le I(q:x,y)+C(z|q,y)+O(\log n)
\end{align}
and the difference between the right hand side and the left hand side is equal to
\begin{align} 
I(q:y)+C(z|q,y,x)+I(q:x|y,z)+O(\log n).\label{eq7}
\end{align}
\end{lemma}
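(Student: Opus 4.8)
The plan is to prove the quantitative \emph{identity} in the second part of the lemma—that the right-hand side of \eqref{eq6} minus its left-hand side equals the expression \eqref{eq7}—and then obtain the inequality \eqref{eq6} for free. This deduction is legitimate because every summand in \eqref{eq7} is nonnegative up to an additive constant: $I(q:y)=C(y)-C(y|q)\ge -O(1)$ since $C(y|q)\le C(y)+O(1)$; the term $C(z|q,y,x)$ is $\ge 0$; and $I(q:x|y,z)=C(x|y,z)-C(x|q,y,z)\ge -O(1)$. Hence \eqref{eq7} is bounded below by $-O(\log n)$, which says precisely that the left-hand side of \eqref{eq6} is at most its right-hand side up to $O(\log n)$. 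So the whole content is in the identity.

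To prove the identity I would expand everything through the definitions and reduce to chain rules. First decompose the mutual information by the information chain rule,
\[
I(q:x,y)=I(q:y)+I(q:x|y)+O(\log n),
\]
which is itself just two applications of the ordinary chain rule (namely $C(y)+C(x|y)=C(x,y)+O(\log n)$ and $C(y|q)+C(x|q,y)=C(x,y|q)+O(\log n)$) together with commutativity of information. The term $I(q:y)$ now occurs on both sides of the target identity and cancels, leaving the claim
\[
I(q:x|y)+C(z|q,y)-I(z:x|y)=C(z|q,y,x)+I(q:x|y,z)+O(\log n).
\]

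Into this I would substitute $I(q:x|y)=C(x|y)-C(x|q,y)$, $I(z:x|y)=C(x|y)-C(x|z,y)$, and $I(q:x|y,z)=C(x|y,z)-C(x|q,y,z)$. The two copies of $C(x|y)$ cancel, and since $C(x|z,y)=C(x|y,z)+O(1)$ these cancel as well, so the claim collapses to
\[
C(z|q,y)+C(x|q,y,z)=C(z|q,y,x)+C(x|q,y)+O(\log n).
\]
This last equality is nothing but the conditional chain rule for the pair $(x,z)$ under the condition $(q,y)$, evaluated in its two orders: both sides equal $C(x,z|q,y)+O(\log n)$. That closes the argument. I do not expect a genuine obstacle; the only thing demanding care is the bookkeeping of the $O(\log n)$ error terms through the successive chain rules and commutativity, and the harmless reorderings of strings inside conditions (each costing $O(1)$). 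Conceptually, the proof is just the observation that $I(q:x,y)$ splits off an $I(q:y)$ factor and that the remaining conditional complexities rearrange by a single chain rule.
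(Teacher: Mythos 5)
Your proposal is correct and follows essentially the same route as the paper: prove the identity between the difference of the two sides of \eqref{eq6} and the sum \eqref{eq7}, deduce the inequality from non-negativity of the terms in \eqref{eq7}, and establish the identity by splitting off $I(q:y)$ via the chain rule and reducing what remains to a symmetric chain-rule decomposition. The only (cosmetic) difference is the final step: the paper recognizes both remaining sides as the two decompositions of $I(q,z:x|y)$, whereas you expand everything into conditional complexities and invoke the two decompositions of $C(x,z|q,y)$ --- the same computation carried out one level lower.
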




 \begin{proof} 
Since all the terms in the sum~\eqref{eq7} are non-negative, the second statement implies the first one.
Let us prove the second statement. Subtracting the sum~\eqref{eq7}  from the right hand side of the inequality~\eqref{eq6},
we obtain 
$$
 I(q:x|y)+ I(x:z|q,y)-I(q:x|y,z)
$$
(ignoring terms of order $O(\log n)$). Thus we have to show that
$$
 I(q:x|y)+ I(z:x|q,y)=I(z:x|y)+I(q:x|y,z).
$$
This is obvious, since by the Chain rule both hand sides are equal to
$$ 
I(q,z:x|y).
\qed$$
\renewcommand{\qed}{}\end{proof}



\subsection{The network of  Fig.~\ref{f7}(e)}

For this network minimal $C(p),C(q), C(p,q)$ are $C(y|x),C(z|x),C(y,z|x)$,
respectively. Minimal private disclosure $I(p,q:x)$ is zero, which is witnessed by the pair 
$p,q$, 
where $p=q$ is the minimum length program for  $[y,z]$ relative to $x$.

However, in general, neither the minimum $C(p)$, $C(q)$, $C(p,q)$, nor the minimum $C(p)$, $C(q)$, $I(p,q:x)$
can be attained simultaneously.
First we show that there are $x,y,z$ such that the minimum $C(p),C(q), C(p,q)$ cannot be attained simultaneously.
The gap will be one eighth of the maximal length of $x,y,z$.

\begin{theorem}\label{th10}
For all natural $\eps$ 
there are strings $x,y,z$ of length at most 
$8\eps+O(1)$ such that for all strings 
$p,q$ with
\begin{align*}
C(y|p,x)&< \eps,\\
C(z|q,x)&< \eps,\\
C(p)&< C(y|x)+\eps-O(1), \\
C(q)&< C(z|x)+\eps-O(1)
\end{align*}
(that is, the pair $(p,q)$
is $\eps$ information transmission and  complexities of $p,q$ are  close
to the minimum) it holds that 
\begin{align*}
C(p,q)&\ge C(y,z|x)+\eps -O(1)
\end{align*}
(that is, the complexity of the pair $p,q$ is $\eps$ larger than its minimum).
\end{theorem}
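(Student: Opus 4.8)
The plan is to exhibit a single hard instance $x,y,z$ and to prove only the stated lower bound on $C(p,q)$ directly, since the surrounding text already records that the three minima $C(y|x),C(z|x),C(y,z|x)$ are individually attainable. I would build the instance from a genuinely (not functionally) correlated pair together with a \emph{mixing} digest. Concretely, take $m=8\eps$, let $y$ be uniformly random in $\{0,1\}^m$, let $e$ be a random vector of Hamming weight $2\eps$ in $\{0,1\}^m$ independent of $y$, set $z=y\oplus e$, and let $x=g(e)\in\{0,1\}^{3\eps}$ where $g$ is a fixed, low-complexity digest ($C(g)=O(\log n)$) each of whose output bits depends on all coordinates of $e$. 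For a random such triple one verifies the profile we need: $y$ and $z$ are each independent of $x$, so $C(y|x)=C(y)=m$ and $C(z|x)=C(z)=m$; the pair $(y,z)$ computes $e$ and hence $x$, so $C(x|y,z)=O(\log n)$ and $I(y,z:x)=C(x)=3\eps$; consequently $C(y,z|x)=C(y,z)-3\eps+O(\log n)$. The essential feature is that, relative to $x$, the strings $y,z$ carry no \emph{extractable} common information: since $C(e|x)$ is a constant fraction of $m$ and $g$ mixes all coordinates of $e$, there is no cheap way to recover the disagreement pattern of $y$ and $z$ from $x$ alone. (I choose the weight, $m$, and $|x|=3\eps$ so that the maximal length is $8\eps+O(1)$.)

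Next I would reduce the whole theorem to one clean statement: \emph{every valid near-minimal pair $(p,q)$ jointly determines $x$}, i.e. $C(x|p,q)=O(\log n)$. Granting this, the bound is pure counting. Since $C(y|p,x)<\eps$ and $C(z|q,x)<\eps$, the pair $(y,z)$ is computable from $(p,q,x)$ up to $2\eps$ bits, so by the chain rule
\[ C(y,z)\le C(p,q)+C(x|p,q)+2\eps+O(\log n). \]
Using $C(y,z)\ge C(y,z|x)+3\eps-O(\log n)$ and the crux $C(x|p,q)=O(\log n)$ this gives
\[ C(p,q)\ge C(y,z|x)+3\eps-2\eps-O(\log n)=C(y,z|x)+\eps-O(\log n). \]
The surviving $\eps$ is exactly the forced disclosure $I(y,z:x)=3\eps$ minus the $2\eps$ of transmission slack; this is why I fix $|x|=3\eps$ rather than $\eps$.

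The hard part is the crux — that near-minimal transmissions cannot avoid revealing $x$ — and I would attack it by a consistency/fixed-point count. Call $\xi$ \emph{consistent with $(p,q)$} if running $p$ and $q$ on input $\xi$ produces strings whose digest reproduces $\xi$, that is $\xi=g\bigl(U(p,\xi)\oplus U(q,\xi)\bigr)$; the true $x$ is consistent, so $C(x|p,q)$ is at most the logarithm of the number of consistent $\xi$. Near-minimality is what bounds this number: from $C(p)<C(y|x)+\eps$ and $C(p|x)\ge C(y|x)-\eps$ we get $I(p:x)<2\eps$, and likewise $I(q:x)<2\eps$, so $p$ and $q$ can each ``know'' only $O(\eps)$ bits about $x$ and can bias the consistency map using only that much $x$-information. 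Because $g$ is mixing, the naive evasion — have $q$ transmit $y$'s version of a shared block and let the node re-derive $z$'s version through $x$, the coordinate-substitution trick that breaks every XOR/projection-style construction — does not create extra fixed points. I expect the genuine obstacle to sit precisely here: showing, \emph{uniformly over all at most $2^{2m+O(\eps)}$ near-minimal candidate pairs}, that the consistency map has only $\mathrm{poly}(n)$ fixed points. I would establish this by the incompressibility method, bounding the number of pairs that could witness many consistent $\xi$ against the randomness of the instance $(y,e)$; this step is the Kolmogorov-complexity analogue of the non-extractability of mutual information (Gács–Körner), and it is exactly the phenomenon that separates networks~(e),(f) from the extractable networks~(b),(c),(d).

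Finally I would check the bookkeeping of constants, since the statement is tight to $O(1)$: the binomial entropy of the weight-$2\eps$ error introduces $O(\log n)$ fuzz, so my argument as sketched yields $C(y,z|x)+\eps-O(\log n)$; tightening the $O(\log n)$ to the $O(1)$ in the statement would require either an error model with exactly computable complexity or absorbing the logarithmic terms into the length budget, which I would address last.
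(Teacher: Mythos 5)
Your reduction is arithmetically fine: \emph{if} every pair $(p,q)$ satisfying the four hypotheses also satisfied $C(x|p,q)=O(\log n)$, then from $C(y,z)\le C(p,q)+C(x|p,q)+2\eps+O(\log n)$ and $I(y,z:x)=C(x)=3\eps$ you would indeed get $C(p,q)\ge C(y,z|x)+\eps-O(\log n)$. But that conditional claim --- your ``crux'' that \emph{every} near-minimal transmission pair jointly determines $x$ --- is the entire content of the theorem, and for it you offer only a plan (``I would establish this by the incompressibility method\dots''), not an argument. Nothing you write bridges the step from ``$I(p:x)<2\eps$ and $I(q:x)<2\eps$ individually'' to ``the consistency map $\xi\mapsto g\bigl(U(p,\xi)\oplus U(q,\xi)\bigr)$ has only $\mathrm{poly}(n)$ fixed points, uniformly over all admissible pairs.'' Note that the near-minimality hypotheses are not used anywhere else in your proof, so this one unproved step must carry all of their force. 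Moreover, whether the claim even holds depends delicately on the digest $g$: the adversary chooses $p,q$ as arbitrary strings that may correlate with $g$. Already the simple family $p=y$, $q=z\oplus E(x)$ for a low-complexity map $E$ satisfies all four hypotheses, and deciding whether such a pair determines $x$ requires controlling the solution set of the equation $v=g\bigl((p\oplus q)\oplus E(v)\bigr)$ for every simple $E$ simultaneously; defeating all such schemes is exactly the Gács--Körner-type non-extractability statement, and proofs of such statements for explicit constructions are substantial results in their own right, not routine incompressibility bookkeeping. So there is a genuine gap, located precisely where the difficulty of the theorem lies (your $O(\log n)$ versus the claimed $O(1)$ slack is a separate, smaller defect).

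It is worth seeing how the paper avoids this obstacle entirely: it never fixes an explicit instance and never proves non-extractability for one. Instead it \emph{diagonalizes}. It enumerates the upper graph of conditional Kolmogorov complexity, maintains for each $\eps$ a candidate triple $(x,y,z)$ together with a computable decompressor $A$ certifying $C_A(y,z|x)\le j$, and keeps the invariant that no ``suitable'' pair (one with $C(p,q)<j+\eps$, $C(p),C(q)<m+\eps$) transmits $y$ and $z$ from $x$. Whenever the enumeration reveals a pair violating this, the candidate is moved (a fresh $x$, or a fresh $(y,z)$ over the same $x$) and re-certified; a counting argument with parameters $n=8\eps+O(1)$, $m=5\eps+O(1)$, $j=7\eps+O(1)$ shows only boundedly many moves can occur, so the candidate stabilizes and the final triple defeats all pairs by construction. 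The instance is thus chosen \emph{in response to} the pairs rather than in advance, which is exactly what lets the paper get the clean $O(1)$ error terms and avoid the fixed-point analysis your approach requires. If you want to salvage the explicit route, you must actually prove your fixed-point bound for a concrete mixing $g$; I would expect that to be at least as hard as the theorem itself.
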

\begin{proof}
The lengths of $x,y,z$ will be denoted by $n,m,m$, respectively.
Besides we will denote by
$j$ the upper bound for $C(y,z|x)$.
All parameters $n,m,j$ will be linear functions of $\eps$, which will be chosen later.

Let us start an enumeration of the upper graph $\{(u,v,i)\mid C(u|v)\le i\}$ of conditional Kolmogorov complexity.
Each time a new triple in that enumeration appears, we update current 
upper bounds for 
$C(u|v)$ for all $u,v$. Those bounds will be denoted by the same letters.
Initially, $C(u|v)=+\infty$ for all $u,v$.

Observing  enumerated triples, we will define a computable function 
$A:\{0,1\}^*\times\{0,1\}^*\to\{0,1\}^*$ by enumerating its graph. 
Besides, for all natural $\eps$ dovetail style we do the following.
We call a pair $p,q$ \emph{suitable for  
$\eps$} (on a certain step of the enumeration)
if $C(p,q)<j+\eps$, $C(p)<m+\eps$, and $C(q)<m+\eps$. 
On every step of the enumeration for all $\eps$ we will define a triple of strings  $(x,y,z)$, called the  \emph{candidate triple}.
The candidate triple will always satisfy the following invariant:
\begin{itemize}
\item $C_A(y,z|x)\le j$ and 
\item there is no suitable pair
$p,q$ with $C(y|p,x),C(z|q,x)<\eps$.
\end{itemize}
Here $C_A(u|x)$ denotes complexity with respect to $A$ defined as 
$\min\{|r|\mid A(r,x)=u\}$.

Initially, we let the candidate triple to be the lex first triple (for all $\eps$). Since initially there are no suitable pairs, the 
second item of the invariant holds in a trivial way.
And to make the first item hold, we choose a string $r$ of length $j$ and let $A(r,x)=[y,z]$ (that is,
we enumerate the triple $(r,x,[y,z])$ in the graph of the function $A$).

When a new suitable pair appears (for some $\eps$), we replace in the candidate triple for that $\eps$ the first string $x$ by the lexicographically
next string. This will be always possible, since the number of available
$x$'s is $2^{n}$, which will by larger than  $2^{\max\{j,m\}+\eps}$, which is an upper bound for 
the number of times the new suitable pair can appear.
Then we choose the second and third strings   $y,z$ in the candidate triple so that the second item of the invariant 
holds. Let us see what we need to make this choice possible.
 The number of suitable pairs is less than  $2^{j+\eps}$, and for each suitable pair there are less than
 $2^{2\eps}$ pairs $y,z$ with $C(y|p,x),C(z|q,x)<\eps$. Thus the number of the pairs 
 $y,z$ that do not satisfy the second item of the invariant is less than $2^{j+3\eps}$. So we need the inequality
 $2^{j+3\eps}\le 2^{2m}$.
And again to make the first item hold, we choose a fresh string $r$ of length $j$ and let $A(r,x)=[y,z]$.

Assume now that for some suitable pair
$p,q$ the second item of the invariant becomes invalid because one of the quantities  $C(y|p,x),C(z|q,x)$ becomes less than $\eps$. 
In this case we do not change  $x$ and choose a new pair $y,z$ in the same way as in the previous paragraph. For every fixed
$x$ this can happen less than $2^{m+\eps}\cdot 2^{\eps}+2^{m+\eps}\cdot 2^{\eps}$ times (the number
of $p$'s times the number of  $y$'s for a fixed $p$,
plus the number of $q$'s times the number of $z$'s for a fixed  $q$).
We are in a good shape, if this number is less than or equal to $2^{j}$, the number of  $r$'s of length $j$ (for different 
$\eps$ the strings $x$ will be different, since they have different lengths).

For every $\eps$ there exists a step in the enumeration after which the second item of invariant 
cannot break and hence the invariant is true forever.
Thus, for the last candidate the invariant is true for the genuine Kolmogorov complexity $C$. Let us show that the invariant implies the statement of the theorem. 
Indeed, the second item guarantees that  
$C(y,z|x)\le j+O(1)$. Therefore the inequality
 $C(p,q)\ge j+\eps$ implies the inequality $C(p,q)\ge C(y,z|x)+\eps -O(1)$. Since both quantities  $C(y|x),C(z|x)$ are less than $m+O(1)$,
 the inequalities  $C(p)< C(y|x)+\eps-O(1), C(q)< C(z|x)+\eps-O(1)$ imply the inequalities
 $C(p)< m+\eps$, $C(q)< m+\eps$, respectively.

It remains to find   $n,m,j$ so that the inequalities 
 $$
 j+3\eps\le 2m,\quad  \max\{j,m\}+\eps\le n,\quad m+2\eps+1\le j
 $$
 hold.
 For example, we can let
 $$
 n=8\eps+2,\quad m=5\eps+1,\quad j=7\eps+2.
 \qed$$
\renewcommand{\qed}{}\end{proof}

\begin{remark}
We can replace $8\eps+O(1)$  by
$5\eps+O(1)$ in the statement of the theorem. To this end, we can modify the construction as follows: every time the invariant breaks,
we change the entire candidate triple in a smart way. We will use that technique in the proof of the next theorem. 
\end{remark}

We will show now that for this network, for some $x,y,z$, it is impossible to attain simultaneously 
the minimum values of $C(p),C(q), I(p,q:x)$, which are
 $C(y|x),C(z|x),0$. The gap will be about one ninth of the maximal length of
$x,y,z$.

\begin{theorem}\label{th1}
For all natural $\eps$ 
there are strings $x,y,z$ of length less than $9\eps + O(1)$ such for all  
$p,q$ with
\begin{align}\label{eqprog}
C(y|p,x)<\eps,
C(z|q,x)<\eps,\quad 
C(p)< C(y|x)+\eps-O(1),\quad 
C(q)< C(z|x)+\eps-O(1)
 \end{align}
we have
\begin{align*} 
C(p,q)-C(p,q|x)&\ge \eps -O(1)
\end{align*}
(that is, the private disclosure is  by $\eps-O(1)$ larger than its minimum). 
\end{theorem}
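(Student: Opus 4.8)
The plan is to run the same kind of diagonalization as in the proof of Theorem~\ref{th10}, but now tuned to lower-bound the difference $C(p,q)-C(p,q|x)$ rather than $C(p,q)$ itself, and using the ``change the entire candidate triple'' device announced in the Remark. As there, I would enumerate the upper graph of conditional complexity, build a computable function $A$ by enumerating its graph, and for every $\eps$ (dovetailed) maintain a candidate triple $(x,y,z)$ of lengths $n,m,m$ satisfying the invariant that (i) $C_A(y,z|x)\le j$ and (ii) no pair $(p,q)$ is simultaneously a near-minimal valid transmission, i.e. $C(p)<m+\eps$, $C(q)<m+\eps$, $C(y|p,x)<\eps$, $C(z|q,x)<\eps$, and has small private disclosure. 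The parameters $n,m,j$ will be linear in $\eps$ and chosen at the very end from the counting inequalities; the target length bound $9\eps+O(1)$ tells me to expect something like the triple that worked for Theorem~\ref{th10} plus one extra $\eps$ of slack spent on disclosure bookkeeping.

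The first thing I would do is rewrite the quantity to be controlled. By Commutativity of information the conclusion $C(p,q)-C(p,q|x)\ge\eps-O(1)$ is the same as $I(p,q:x)\ge\eps-O(1)$, i.e. $C(x|p,q)\le C(x)-\eps+O(1)$. So item (ii) really says: there is no near-minimal valid pair $(p,q)$ for which $x$ stays almost incompressible given $(p,q)$. Equivalently, I want every near-minimal valid $(p,q)$ to \emph{determine $x$ up to $O(1)$ bits}. What makes this possible is the complexity budget: the bounds $C(p),C(q)<m+\eps$ restrict $(p,q)$ to a family of fewer than $2^{2m+2\eps}$ pairs, and each such pair acts as a decoder $x'\mapsto(U(p,x'),U(q,x'))$; I would exploit that a low-complexity decoder can be consistent with the prescribed structure (a valid transmission satisfying $C_A(\cdot|x')\le j$) for only a controlled number of arguments $x'$ — otherwise that decoder would itself compress those $x'$ — so that for the surviving candidate $x$ is forced to be reconstructible from $(p,q)$.

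Concretely, the diagonalization bookkeeping is the same as in Theorem~\ref{th10}: initialize with the lexicographically first triple and set $A(r,x)=[y,z]$ for a fresh length-$j$ string $r$ to keep (i); whenever a newly enumerated pair $(p,q)$ becomes a near-minimal valid transmission for the current candidate without a short program for $x$ relative to $(p,q)$ having yet appeared, I replace the \emph{whole} triple by a fresh $x$ of length $n$ together with $y,z$ chosen, via a new value $A(r,x)=[y,z]$, so as to dodge all previously seen threatening pairs. Bounding the number of replacements by the number of pairs ($<2^{2m+2\eps}$) and the number of admissible $(y,z)$ per $x$ (those with $C_A(y,z|x)\le j$, i.e. $\le 2^{j}$ of them), and requiring enough fresh $x$'s ($2^{n}$) and enough room to pick $(y,z)$ avoiding the forbidden images, yields inequalities among $n,m,j$ that I would solve last; these are the analogues of $j+3\eps\le 2m$, $\max\{j,m\}+\eps\le n$, $m+2\eps\le j$ from Theorem~\ref{th10}, with one more $\eps$ of disclosure slack, giving $n<9\eps+O(1)$. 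As in Theorem~\ref{th10}, for each $\eps$ the invariant eventually stabilizes and the final candidate satisfies it for the genuine complexity $C$, which is exactly the statement.

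The step I expect to be the main obstacle is item (ii) itself, for a reason absent from Theorem~\ref{th10}: there the forbidden event ``$C(p,q)<j+\eps$'' is a single upper bound on complexity and hence enumerable, whereas small private disclosure is a \emph{difference} $C(p,q)-C(p,q|x)<\eps$ and is not semidecidable (confirming it would require certifying that $C(p,q|x)$ is \emph{large}). So I cannot simply wait to observe a bad pair and react. My way around this is to never try to detect small disclosure, but instead to \emph{guarantee large disclosure by construction}: arrange each replacement so that the new $x$ admits a short program relative to every near-minimal valid $(p,q)$ it will ever face. Making that guarantee compatible with keeping all lengths linear in $\eps$ — that is, proving the decoders available under the budget $C(p),C(q)<m+\eps$ cannot be simultaneously valid for too many candidates $x'$ — is the quantitative heart of the argument, and is what forces the slightly larger constant $9$ in place of the $8$ (respectively $5$) of Theorem~\ref{th10}.
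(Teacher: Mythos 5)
There is a genuine gap, and it sits exactly where you predicted the ``quantitative heart'' would be. Your plan replaces the non-enumerable condition ``small private disclosure'' by the constructive goal ``every near-minimal valid pair $(p,q)$ determines $x$ up to $O(1)$ bits.'' That goal is not sufficient for the theorem, and the counting principle you invoke to achieve it is false. It is insufficient because ``$(p,q)$ determines $x$'' gives large disclosure only if $C(x)$ itself is large: $C(p,q)-C(p,q|x)=C(x)-C(x|p,q)+O(\log n)\le C(x)+O(\log n)$, so if the surviving candidate has compressible $x$, the conclusion fails no matter how well the valid pairs reconstruct $x$. Your construction starts from the lexicographically first triple (where $C(x)=O(1)$), and nothing in your invariant forces a candidate change when $x$ is compressible; valid pairs (e.g.\ shortest programs) will appear, you will ``simplify'' $x$ with respect to them, the invariant as you stated it will hold, and the construction can stabilize on a triple for which the theorem is false. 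Fixing this needs an extra invariant $C(x)\ge\tau$ with its own breaks and counting, which you never mention. The counting claim is false because a single budgeted decoder can serve essentially all arguments: for every $x'$ and every $p$ with $C(p)<m+\eps$, the string $y'=U(p,x')$ satisfies $C(y'|p,x')=O(1)<\eps$, so a pair within the complexity budget can be valid (with suitable $y',z'$) for nearly all $2^{n}$ strings $x'$; nothing about ``compressing those $x'$'' prevents this. What would actually have to limit the damage is advice bookkeeping relative to the pair $(p,q)$ --- and that budget is \emph{global}, since pairs persist across candidate changes: each new candidate can later be ``decoded'' by old pairs via enumeration facts ($C(y'|p,x')<\eps$) that are unknowable at the moment the candidate is chosen, so your dodging device cannot protect it. Finally, even if all this were repaired, routing the argument through symmetry of information costs $O(\log\eps)$, yielding $\eps-O(\log\eps)$ rather than the stated $\eps-O(1)$.

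The paper's proof avoids all of this by flipping the direction of conditioning. Its invariant is: (1) every pair of programs $p,q$ for $y,z$ has $C(p,q)\ge j+\eps$ --- the same enumerable-event diagonalization as in Theorem~\ref{th10}; and (2) every such pair has $C_A(p,q|x)\le j$ --- the construction itself assigns to each pair of programs a length-$j$ description \emph{given $x$} (``simplifying $p,q$ with respect to $x$''). The disclosure of any valid pair is then at least $(j+\eps)-(j+O(1))=\eps-O(1)$ directly, with no symmetry of information and no need to control $C(x)$ at all. The decisive structural advantage is that the advice budget in item (2) is relative to $x$, so it \emph{resets at every candidate change}: the counting lemma picks a fresh triple whose $x$ has seen fewer than $2^{j-1}$ prior simplifications, which admits no cheap pair of programs, and whose $y$ and $z$ each have fewer than $2^{2\eps+2}$ programs, so that only $\le 2^{4\eps+4}\le 2^{j-1}$ simplifications are ever needed per candidate. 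That per-candidate resetting is precisely what is unavailable in your direction of conditioning, and it is why the paper's bookkeeping closes while yours does not.
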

\begin{proof}
We will denote the lengths of $x,y,z$ by
$n,m,m$ and choose them later. Besides, we will denote by $j$ the upper bound for  $C(p,q|x)$, the value of  $j$ 
also will be chosen later.

Again we start an enumeration of the  upper graph of conditional Kolmogorov complexity.
Observing the enumeration, we define a computable function $A(u,v)$ by enumerating its 
graph. And again for every $\eps$
on every step of the enumeration we will have a candidate triple  $x,y,z$. For a candidate triple  $x,y,z$, we call $p$
\emph{a program for $y$} if
$C(y|p,x)<\eps$ and $C(p)<m+\eps$,
and we call $q$ \emph{a program for  $z$} if 
$C(z|q,x)<\eps$ and $C(q)<m+\eps$.
Here and later $C$ denotes the current upper bound for Kolmogorov complexity (on the current step
of the enumeration).

Now we keep the following invariant (for every $\eps$):
\begin{enumerate}
\item for all pairs  $p,q$ of programs for $y,z$, respectively, it holds $C(p,q)\ge j+\eps$,
\item for all pairs  $p,q$ of programs for $y,z$, respectively, it holds $C_A(p,q|x)\le j$.
\end{enumerate}
This invariant guarantees the statement of the theorem for the last candidate triple.
Indeed, since
$C(y|x), C(z|x)<m+O(1)$, all  $p,q$ satisfying the inequalities~\eqref{eqprog}
are programs for $y,z$, respectively. Therefore the invariant implies that for every such pair 
$p,q$ it holds $C(p,q)\ge j+\eps$ and $C(p,q|x)\le j+O(1)$,
hence the private disclosure is larger than $\eps-O(1)$.

Initially, the candidate triple is the lex first triple. 
Since initially there are no programs for $y,z$, the invariant is fulfilled.

In the course of the enumeration, both items of the invariant can break.
The invariant can break less than $2^{j+\eps}$ times because Kolmogorov complexity of a pair of programs $p,q$
becomes less than
 $j+\eps$ and less than 
$2^{n+m+2\eps}$ times because of appearance of a new program for $y$ or for $z$.
After any break we find a new candidate triple in such a way that the first item of the invariant holds.
Then we let $C_A(p,q|x)\le j$ for pairs $p,q$ of programs for $y,z$, respectively, thus restoring the second item.
This operation will be called  the \emph{simplifying $p,q$ with respect to $x$}.  

The new candidate triple is chosen to satisfy the statement of the following
\begin{lemma}
Assume that the total number of simplifications made so far is less than $2^{n+j-3}$. Assume further that 
$j+3\eps\le 2m-2$. Then there is a triple of strings   $x,y,z$ of lengths $n,m,m$ such that 
 \begin{itemize}
\item so far, there have been made less than  $2^{j-1}$ simplifications with respect to $x$,  
\item  $C(p,q)\ge j+\eps$ for every pair of programs $p,q$ for $y,z$, respectively, 
\item the number of programs for $y$ is less than $2^{2\eps+2}$, 
\item the number of programs for $z$ is less than $2^{2\eps+2}$.
\end{itemize}
\end{lemma}
\begin{proof}
We will show that each of these four conditions holds for more than $3/4$ of triples $x,y,z$.

Indeed, assume that the triple $x,y,z$ is chosen at random with respect to the uniform distribution.
The first assumption implies that on average there have been made less than $2^{j-3}$ simplifications with respect to a single $x$.
Hence there are less than a quarter of $x$'s with respect to which we have made more than $2^{j-1}$ simplifications.

For every pair $p,q$ and for every  $x$ there are less than $2^{\eps}$  $y$'s for which $p$ is a program
and less than $2^{\eps}$ $z$'s for which  $q$ is a program. Thus for each $p,q$ and  $x$ there are less than $2^{2\eps}$
pairs $y,z$ for which $p,q$ are programs. Hence for every $x$ there are less than
 $2^{j+\eps+2\eps}$ pairs  $y,z$ which has a pair of programs of complexity less than $j+\eps$. The second assumption implies
 that there are less than a quarter of such pairs $y,z$.
 
Finally, as we have shown in the last paragraph, for every pair  $x,p$
there are less than $2^{\eps}$  $y$'s for which $p$ is a program.
Thus for every $x$ on average a single $y$ has less than $2^{m+\eps+\eps}/2^m$
programs.
Therefore for every $x$ less than a quarter of 
 $y$'s have $2^{2\eps+2}$ (or more) programs. 
 Similar arguments work for  $z$.
 \end{proof}

Recall that the invariant can break less than $2^{j+\eps}+2^{n+m+2\eps}$ times. Hence the total 
number of simplifications is less than this number times $2^{2\eps+2}\cdot 2^{2\eps+2}$.
To find a new candidate triple we need this number be less than $2^{n+j-3}$.
Therefore we have to choose $n,m,j$  so that 
$$
(2^{j+\eps}+2^{n+m+2\eps})\cdot2^{4\eps+4}\le 2^{n+j-3}, \quad j+3\eps\le 2m-2, \quad 2^{2\eps+2}\cdot 2^{2\eps+2}\le 2^{j-1}
 $$
(we need the last inequality to be able to make  $2^{2\eps+2}\cdot 2^{2\eps+2}$ simplifications 
for the new candidate triple).
These inequalities hold for the following parameters:
$$
n=5\eps+8,\quad m=9\eps+10, \quad j=15\eps+18.\qed
$$
\renewcommand{\qed}{}
\end{proof}

\subsection{The network  Fig.~\ref{f7}(f)}

For this network the minimum  $C(p,q)$ is 
$\max\{C(y,z|x),C( z|y)\}$. The lower bound follows from Lemma~\ref{l3} applied to the following two cuts.
The cut consisting of both right nodes yields the inequality
$C(p,q)\ge C(y,z|x)-2\eps-O(\log n)$, and the cut consisting of the single bottom right node yields the inequality
 $C(q)\ge C( z|y)-\eps-O(\log n)$.

On the other hand, we can apply Theorem~\ref{th2} after replacing both output strings by 
 $[y,z]$. By this theorem there exists a pair of strings $p,q$ of complexity
$$
\max\{C(y,z|x),C(y,z|y)\}+O(\log n)
$$ 
that is 
$O(\log n)$ information transmission in this network. Since $C(y,z|y)=C(z|y)+O(1)$,
we obtain the sought upper bound.

\begin{theorem}
The minimum private disclosure  $I(p,q:x)$
for the circuit of Fig.~\ref{f7}(f)
is  $$\max\{0,C(z|y)-C(y,z|x)\}.$$
\end{theorem}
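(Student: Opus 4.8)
The plan is to reduce the entire statement to a single identity that expresses the private disclosure through the total disclosure $C(p,q)$; once this identity is in hand, both the lower and the upper bound follow at once from material already in the paper. Throughout, recall that in this network the lower node has input $y$, so an $\eps$ information transmission pair satisfies $C(y\mid p,x)<\eps$ and $C(z\mid q,y)<\eps$, together with $C(p\mid x,y,z),C(q\mid x,y,z)<\eps$.

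First I would record two routine estimates valid for every such pair:
$$
C(y,z\mid p,q,x)\le 2\eps+O(\log n),\qquad C(p,q\mid x,y,z)\le 2\eps+O(\log n),
$$
the first by recovering $y$ from $(p,x)$ and then $z$ from $(q,y)$, the second by the chain rule together with $C(p\mid x,y,z),C(q\mid x,y,z)<\eps$. The crucial step is then the identity
$$
I(p,q:x)=C(p,q)-C(y,z\mid x)+O(\eps+\log n).
$$
To prove it I would start from Commutativity of information, which gives $I(p,q:w,x,y,z)=C(p,q)+O(\eps+\log n)$ since $C(p,q\mid x,y,z)=O(\eps+\log n)$, and then split the total disclosure by the Chain rule for information as $I(p,q:x,y,z)=I(p,q:x)+I(p,q:y,z\mid x)+O(\log n)$. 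The conditional term evaluates to $I(p,q:y,z\mid x)=C(y,z\mid x)-C(y,z\mid p,q,x)=C(y,z\mid x)-O(\eps+\log n)$ by the first estimate, and subtracting yields the identity.

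With the identity established, the lower bound is immediate: the cut consisting of the single bottom right node (via Lemma~\ref{l3}) gives $C(p,q)\ge C(q)\ge C(z\mid y)-\eps-O(\log n)$, so $I(p,q:x)\ge C(z\mid y)-C(y,z\mid x)-O(\eps+\log n)$, while $I(p,q:x)\ge -O(1)$ holds trivially; together these give $I(p,q:x)\ge\max\{0,C(z\mid y)-C(y,z\mid x)\}-O(\eps+\log n)$. For the upper bound I would take the very pair used just before the theorem to attain the minimal $C(p,q)$: applying Theorem~\ref{th2} with both output strings replaced by $[y,z]$ produces a pair $(p,q)$ that is $O(\log n)$ information transmission in this network with $C(p,q)=\max\{C(y,z\mid x),C(z\mid y)\}+O(\log n)$. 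Feeding this into the identity (with $\eps=O(\log n)$) gives $I(p,q:x)=\max\{0,C(z\mid y)-C(y,z\mid x)\}+O(\log n)$, which is the required upper bound.

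The proof is short once the identity is spotted, so the genuine content is the observation that $I(p,q:x)=C(p,q)-C(y,z\mid x)$ up to the usual error, i.e. that minimizing the private disclosure is the \emph{same problem} as minimizing the total disclosure $C(p,q)$, merely shifted by the forced quantity $C(y,z\mid x)$. The only point requiring care is bookkeeping in the identity: one must apply the Chain rule for information with the correct grouping $(x)$ versus $(y,z)$ and track that the $O(\eps)$ contributions from $C(y,z\mid p,q,x)$ and $C(p,q\mid x,y,z)$ are absorbed consistently into the final $O(\eps+\log n)$ term; no harder construction (such as a bespoke low-disclosure program) is needed, because the optimal $C(p,q)$-pair automatically minimizes $I(p,q:x)$.
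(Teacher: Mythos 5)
Your proof is correct, but it takes a genuinely different route from the paper's. The paper treats the two bounds by separate arguments: the lower bound is obtained from Lemma~\ref{l1} applied to $q$ alone, namely $I(z:x|y)\le I(q:x,y)+C(z|q,y)$, followed by the chain rule to peel off $C(y|x)$; the upper bound comes from a two-case analysis of the pair supplied by Theorem~\ref{th2} for the triple $x,y,[y,z]$, exploiting that $p,q$ are consistent (one a prefix of the other), that a shortest program for $[y,z]$ relative to $x$ carries no information about $x$, and that consequently $I(q:x)$ is bounded by the length difference $|q|-|p|$. You instead prove a single identity, $I(p,q:x)=C(p,q)-C(y,z|x)+O(\eps+\log n)$, valid for every $\eps$ information transmission pair, and then both bounds follow from facts the paper establishes just before the theorem: the cut bound $C(q)\ge C(z|y)-\eps-O(\log n)$ together with the trivial $I(p,q:x)\ge -O(1)$ gives the lower bound, and plugging in the Theorem~\ref{th2} pair (which is indeed $O(\log n)$ information transmission, so the identity applies with $\eps=O(\log n)$) gives the upper bound. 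I checked the identity: the estimates $C(p,q|x,y,z)\le 2\eps+O(\log n)$ and $C(y,z|p,q,x)\le 2\eps+O(\log n)$ hold exactly as you claim, and the decomposition $I(p,q:x,y,z)=I(p,q:x)+I(p,q:y,z|x)+O(\log n)$ is the same form of the chain rule the paper itself uses in proving Lemma~\ref{l3}. Your approach buys uniformity and transparency -- it exhibits minimization of the private disclosure as literally the minimization of $C(p,q)$ shifted by the forced quantity $C(y,z|x)$, and the same identity yields the minimum disclosure $0$ for the network of Fig.~\ref{f7}(e) with no additional work -- and it avoids the structural facts about the Muchnik pair entirely. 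What it gives up is sharpness in $\eps$ (your lower bound loses about $3\eps$ where the paper loses $\eps$) and structural information about the witnessing pair; in particular, because of the $O(\eps)$ error term the identity cannot transfer gaps of size exactly $\eps$ between $C(p,q)$ and $I(p,q:x)$, which is why Theorems~\ref{th11} and~\ref{th4} still require their separate constructions. For the statement as formulated, where errors of the form $c\eps+d\log n+e$ are permitted, your proof is complete.
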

\begin{proof}
The lower bound follows Lemma~\ref{l1}. Indeed, assume that a pair  $p,q$ is 
$\eps$ information transmission in this circuit. By Lemma~\ref{l1}
we have the inequality
$$
I(z:x|y)\le I(q:x,y)+C(z|q,y).
$$ 
(All inequalities in this proof hold with accuracy
$O(\log n)$.)
By the Chain rule we have 
$$I(q:x,y)=I(q:x)+I(q:y|x)\le I(q:x)+C(y|x).
$$  
Since $C(z|q,y)<\eps$,
we obtain the inequality
$$
I(z:x|y)\le I(q:x)+C(y|x)+\eps.
$$ 
It remains to note that 
$C(z|y)-C(y,z|x)$ and $I(z:x|y)-C(y|x)$ coincide with logarithmic accuracy.

The upper bound for  $I(p,q:x)$ follows from Theorem~\ref{th2} applied to the triple
$x,y,[y,z]$.
By this theorem there is a pair $p,q$ of consistent strings that have complexities
$C(y,z|x), C(y,z|y)$, respectively, such that  $C(p,q)$ is equal to the maximum of these numbers and 
all quantities 
$$C(p,q|y,z),C(y,z|x,p),C(y,z|y,q)$$
are of order $O(\log n)$. 
Now we will distinguish two cases.

Case  1: $C(y,z|x)>C(y,z|y)$.
Then the pair $p,q$ basically equals $p$ and is independent of $x$ (as a minimum length program for  $y,z$ relative to
 $x$). That is, the private disclosure is 0.
On the other hand, the maximum is the statement of the theorem
is also 0 with accuracy  $O(1)$, since $C(y,z|y)=C(z|y)+O(1)$.

Case 2: $C(y,z|x)\le C(y,z|y)$.
Then the pair $p,q$ basically equals $q$. In this case the string $q$ may have information about $x$.
However its length-$|p|$ prefix is $p$ and hence is independent on $x$. By the Chain rule the amount of mutual 
information in $q$ and $x$ cannot exceed the difference of lengths of $p$ and $q$, which is equal to
$$C(y,z|y)-C(y,z|x)= C(z|y)-C(y,z|x)+O(1).\qed$$
\renewcommand{\qed}{}
\end{proof}

It turns out that for this network there are 
 $x,y,z$ such that $C(p),C(q), C(p,q)$ cannot simultaneously attain minimum values.

\begin{theorem}\label{th11}
For all natural  $\eps$ 
there exist strings $x,y,z$ of length at most $5\eps+O(1)$ such that 
for all  $p,q$
the inequalities 
\begin{align*}
C(y|p,x)&< \eps,\\
C(z|q,y)&< \eps,\\
C(p)&< C(y|x)+\eps-O(1), \\
C(q)&< C(z|y)+\eps-O(1)
\end{align*}
imply the inequality
\begin{align*}
C(p,q)&\ge \max\{C(y,z|x),C(z|y)\}+\eps -O(1)
\end{align*}
(that is, $C(p,q)$ is $\eps$ larger than its minimum).
\end{theorem}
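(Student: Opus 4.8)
The plan is to prove this by a diagonalization closely following the proof of Theorem~\ref{th10}, with the two modifications forced by the present network: the lower consumer now recovers $z$ from $y$ rather than from $x$, and we replace the \emph{entire} candidate triple at each step (the ``smart'' technique announced in the Remark after Theorem~\ref{th10} and carried out in Theorem~\ref{th1}) in order to bring the lengths down to $5\eps+O(1)$. Concretely, I would fix lengths $|x|=n$, $|y|=|z|=m$ and a bound $j$ on $C(y,z|x)$, all linear in $\eps$ and chosen at the end; start enumerating the upper graph of conditional complexity; and build a computable $A$ by enumerating its graph. Writing $C_A(u|v)=\min\{|r|\mid A(r,v)=u\}$, we guarantee $C(y,z|x)\le j+O(1)$ by putting a single fresh triple $(r,x,[y,z])$ with $|r|=j$ into $A$ whenever we fix a new candidate. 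Calling $(p,q)$ \emph{suitable} if $C(p,q)<j+\eps$, $C(p)<m+\eps$, $C(q)<m+\eps$, I would maintain for every $\eps$ (dovetailed; different $\eps$ use strings of different lengths and hence never interfere) the invariant that (i) $C_A(y,z|x)\le j$ and (ii) there is no suitable pair $(p,q)$ with $C(y|p,x)<\eps$ and $C(z|q,y)<\eps$.

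The point of the invariant is this. Since $|z|=m$ we automatically have $C(z|y)\le m+O(1)$, so as long as $m\le j$ item (i) yields $\max\{C(y,z|x),C(z|y)\}\le j+O(1)$; and since $C(y|x),C(z|y)\le m+O(1)$, any $p,q$ satisfying the four hypotheses of the theorem are programs with $C(p),C(q)<m+\eps$. Hence if such a pair also had $C(p,q)<j+\eps$ it would be a suitable witness, contradicting (ii); therefore $C(p,q)\ge j+\eps\ge \max\{C(y,z|x),C(z|y)\}+\eps-O(1)$, which is exactly the conclusion. So the whole task reduces to being able to restore the invariant after each break.

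To restore it I would, as in Theorem~\ref{th1}, choose a fresh candidate triple by a majority argument: more than $3/4$ of triples $(x,y,z)$ carry no suitable witness, and more than $3/4$ of all triples use an $x$ on which few fresh $r$'s have been spent, so a triple meeting both requirements exists. The witness count matches Theorem~\ref{th10}: for a fixed suitable $(p,q)$ and a fixed $x$ there are $<2^{\eps}$ choices of $y$ with $C(y|p,x)<\eps$ and, for each, $<2^{\eps}$ choices of $z$ with $C(z|q,y)<\eps$, i.e. $<2^{2\eps}$ bad pairs $(y,z)$; with $<2^{j+\eps}$ suitable pairs the bad fraction is $<2^{j+3\eps}/2^{2m}$, which is $<1/4$ once $j+3\eps\le 2m-2$. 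The key economy over Theorem~\ref{th1} is that restoring (i) costs only \emph{one} fresh $r$ per triple (the single assignment $A(r,x)=[y,z]$), not one per pair of programs, so the total number of simplifications equals the number of breaks, with no extra $2^{O(\eps)}$ factor. Breaks occur $<2^{j+\eps}$ times from some $C(p,q)$ dropping below $j+\eps$, $<2^{n+m+2\eps}$ times from a new program for $y$ appearing, and---this is where the new topology shows up---$<2^{2m+2\eps}$ times from a new program for $z$ (since $C(z|q,y)$ now ranges over $2^m$ values of $y$ rather than over the $x$'s). Requiring the total to be below $2^{n+j-3}$, so that the average number of fresh $r$'s per $x$ is below $2^{j-3}$ and fewer than a quarter of the $x$'s exceed $2^{j-1}$, gives $j>m+2\eps+O(1)$ and $n+j>2m+2\eps+O(1)$; together with $m\le j$ and $j+3\eps\le 2m-2$ these are jointly satisfiable with $n=5\eps+O(1)$, $m=5\eps+O(1)$, $j=7\eps+O(1)$, so that $\max\{|x|,|y|,|z|\}=5\eps+O(1)$.

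The main obstacle I expect is precisely the bookkeeping created by the $z\mid y$ conditioning: replacing $x$ by $y$ in the lower consumer couples the choices of $y$ and $z$ in the witness count, and, more importantly, raises the number of ``new program for $z$'' breaks from $2^{n+m+2\eps}$ to $2^{2m+2\eps}$, the term that ends up constraining $n$ through $n+j>2m+2\eps+O(1)$. One must check that this heavier term still fits under the simplification budget $2^{n+j-3}$ at the tight value $m=5\eps+O(1)$ while the two majority estimates---witness-freeness and frugality in $r$'s---simultaneously leave a nonempty set of admissible triples, and that the invariant indeed stabilizes for every $\eps$ (finitely many suitable pairs and programs), so that it eventually holds for the genuine complexity $C$. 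Verifying that the four inequalities in $n,m,j$ are feasible at lengths $5\eps+O(1)$ is the only delicate part; everything else is the routine transfer of the argument of Theorem~\ref{th10}.
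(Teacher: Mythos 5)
Your proposal is correct and follows essentially the same route as the paper's proof: a candidate-triple diagonalization in which the entire triple is replaced at every break of the invariant, the new triple being found by a majority counting argument, with the same break accounting ($<2^{j+\eps}+2^{n+m+2\eps}+2^{2m+2\eps}$ breaks) and the same parameter scale $n,m=5\eps+O(1)$, $j=7\eps+O(1)$. The only difference is a harmless streamlining: the paper additionally maintains $C_A(z|y)\le j$ by a second kind of simplification (so its lemma must also count simplifications per $y$), whereas you observe that $C(z|y)\le |z|+O(1)=m+O(1)\le j+O(1)$ holds for free once $m\le j$ — which is consistent with the paper's own choice of parameters, where $k<j$ as well.
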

\begin{proof}
The proof is similar to that of Theorem~\ref{th1}.
Unfortunately, the simpler technique from the proof of
Theorem~\ref{th10} does not help.

We will denote the lengths of $x,y,z$ by $n,m,k$  and again we will use another parameter  $j$, the upper bound for $\max\{C(y,z|x),C(z|y)\}$.
Again we enumerate the upper graph of conditional Kolmogorov
complexity and define a computable function
$A(u,v)$ by means of enumerating its graph. 
And again on every step of that enumeration for every  $\eps$ we will have a candidate triple  $x,y,z$.
For a given candidate triple $x,y,z$ we call a string $p$
a \emph{program for $y$} if $C(y|p,x)<\eps$ and $C(p)<m+\eps$.
And we call $q$ 
\emph{a program for $z$} if 
$C(z|q,y)<\eps$ and $C(q)<k+\eps$.

We maintain now the following invariant:
\begin{enumerate}
\item for every pair of programs  $p,q$ for $y,z$, respectively, it holds $C(p,q)\ge j+\eps$,
\item $C_A(y,z|x)\le j$,
\item $C_A(z|y)\le j$.
\end{enumerate}
As before this invariant guarantees the statement of the theorem for
the last candidate triple.

The second and the third items cannot break. The first item can break less than $2^{j+\eps}$ times
because Kolmogorov complexity if some pair of programs becomes less than  $j+\eps$  and 
less than $2^{n+m+2\eps}+2^{m+k+2\eps}$ times because of appearance of a new program.
After each break we find a new candidate triple satisfying the first item
of the invariant. Then we let  
$C_A(y,z|x)\le j$, which is called  
 \emph{simplifying the pair $y,z$ with respect to $x$}.  
Then we simplify $z$ with respect to $y$
 (that is, we let $C_A(z|y)\le j)$. 

The new candidate triple is chosen by means of the following 
\begin{lemma}
Assume that the total number of simplifications of both types made so far is less than  $2^{\min\{n,m\}+j-2}$.
Assume further that $j+3\eps\le m+k-2$. Then there is a triple of strings $x,y,z$ of lengths $n,m,k$, respectively,
such that
 \begin{itemize}
\item the number of simplifications of $y,z$ with respect to  $x$ made so far is less than  $2^{j}$,  
\item  the number of simplifications of $z$ with respect to  $y$ made so far is less than $2^{j}$,  
\item $C(p,q)\ge j+\eps$ for every pair of programs $p,q$ for $y,z$, respectively. 
\end{itemize}
\end{lemma}
\begin{proof}
This lemma is proven in a similar way as the similar lemma in the proof of 
Theorem~\ref{th1}. Namely, we show that each of these three conditions holds for more than three 
quarters of triples $x,y,z$.
Indeed, the first assumption implies that on average with respect to a single 
$x$ there have been made less than $2^{j-2}$ simplifications.
Hence with respect to less than a quarter of $x$'s we have made  $2^{j}$ (or more) simplifications.
A similar arguments works for simplifications with respect to $y$.

For every pair $p,q$ and for every $x$ there are less than  $2^{\eps}$ $y$'s for which  $p$ is a program,
and for each of these $y$'s there are less than $2^{\eps}$ $z$'s for which  $q$
is a program. Thus for each pair $p,q$ and for each $x$ there are less than $2^{2\eps}$
 pairs $y,z$ such that $p,q$ are programs for $y,z$, respectively. Hence for all  $x$
 there are less than  $2^{j+3\eps}$ pairs  $y,z$ such for some pair $p,q$  of complexity 
 less than  $j+\eps$ strings $p,q$  are programs for   $y,z$, respectively. By the second assumption
 there are less than a quarter of such pairs $y,z$ (for every $x$).
    \end{proof}


Recall that the invariant can break less than  $2^{j+\eps}+2^{n+m+2\eps}+2^{m+k+2\eps}$ times
and thus the number of simplifications of each type is less than this number. To meet the assumptions of the theorem we need
this number be less than  $2^{\min\{n,m\}+j-2}$.
Thus it suffices to choose $n,m,j$ so that 
$$
2^{j+\eps}+2^{n+m+2\eps}+2^{m+k+2\eps}\le 2^{\min\{n,m\}+j-2}, \quad j+3\eps\le m+k-2
 $$
These inequalities are satisfied, for instance, by the following values:
$$
n=m=k=5\eps+6,\quad j=7\eps+10.\qed
$$
\renewcommand{\qed}{}
\end{proof}

The next theorem claims that for some $x,y,z$ 
it is impossible to attain the minimum of the quantities 
$C(p),C(q),I(p,q:x)$ simultaneously. 

\begin{theorem}\label{th4}
For all natural $\eps,l$ 
there are strings  $x,y,z$ of length at most $9\eps+l+O(1)$ such that  
for all strings $p,q$ with
\begin{align*}
&C(y|p,x)<\eps,\quad
C(z|q,y)<\eps,&
&C(p)< C(y|x)+\eps-O(1), \quad
C(q)< C(z|y)+\eps-O(1)&
 \end{align*}
we have that 
\begin{align*} 
C(p,q)-C(p,q|x)&\ge l+\eps -O(\log l),\qquad C(z|y)-C(y,z|x)\le l+O(1) 
\end{align*}
(that is, the private disclosure is about $\eps$ larger than its minimum, which is less than the given number $l$).
\end{theorem}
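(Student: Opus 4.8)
The plan is to diagonalize, exactly as in the proof of Theorem~\ref{th1}, but now against the network of Fig.~\ref{f7}(f) and with a \emph{nonzero} target for the minimum disclosure. I would fix lengths $n,m,k$ of $x,y,z$ and a threshold $j$, all linear in $\eps$ and $l$, arranged so that $k=j$ and $j=m+l$ with $m$ of order $9\eps$; then the maximal length is $k=9\eps+l+O(1)$, as required, and the remaining choices ($n$ and the additive constants) are pinned down by the counting inequalities collected at the end. As in Theorems~\ref{th1} and~\ref{th11}, I would enumerate the upper graph of conditional complexity, build a computable $A(u,v)$ by enumerating its graph, and keep for every $\eps$ a candidate triple $(x,y,z)$. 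Here $p$ is a \emph{program for $y$} if $C(y|p,x)<\eps$ and $C(p)<m+\eps$, and $q$ is a \emph{program for $z$} if $C(z|q,y)<\eps$ and $C(q)<k+\eps$; note that, matching Fig.~\ref{f7}(f), the reconstruction of $z$ is conditioned on $y$ rather than on $x$.

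The invariant I would maintain has just two items: (1)~$C(p,q)\ge j+\eps$ for every pair of programs $p,q$ for $y,z$, and (2)~$C_A(p,q|x)\le j-l$ for every such pair. Item~(1) is kept by diagonalization and breaks whenever the enumeration exposes a program pair with $C(p,q)<j+\eps$ (or a new program appears); item~(2) is kept by the operation of \emph{simplifying $p,q$ with respect to $x$}, i.e.\ picking a fresh $r$ of length $j-l$ and declaring $A(r,x)=[p,q]$. For the last candidate the invariant holds forever, and it yields both conclusions. Since $C(y|x)\le m+O(1)$ and $C(z|y)\le k+O(1)$, every pair meeting the hypotheses is a pair of programs, so (1) and (2) give $C(p,q)-C(p,q|x)\ge(j+\eps)-(j-l)-O(1)=l+\eps-O(1)$, a fortiori the stated first conclusion. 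For the minimum, $C(z|y)\le k+O(1)=j+O(1)$, while $C(y,z|x)\ge C(y|x)\ge m-O(1)$ because $(y,z)$ computes $y$ and (by the lemma below) $y$ is incompressible given $x$; with $j=m+l$ this gives $C(z|y)-C(y,z|x)\le l+O(1)$. The pleasant point is that, unlike in Theorem~\ref{th11}, no invariant item is needed to control $C(z|y)$ or $C(y,z|x)$: once $j=m+l$ the second conclusion falls out of the trivial inequality $C(y,z|x)\ge C(y|x)$.

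Whenever the invariant breaks I would, following the ``change the whole triple'' device announced in the Remark after Theorem~\ref{th10}, discard the current triple and pick a new candidate by a counting lemma, then restore~(2) by simplifying all of its program pairs with respect to $x$. The lemma should state: if the number of simplifications performed so far is not too large and $j+3\eps\le m+k-O(1)$, then more than three quarters of the triples $(x,y,z)$ of lengths $n,m,k$ enjoy all of: few simplifications have been made with respect to $x$; $y$ has fewer than $2^{2\eps+O(1)}$ programs and $z$ has fewer than $2^{2\eps+O(1)}$ programs (hence fewer than $2^{4\eps+O(1)}$ program pairs, which therefore fit into the $2^{j-l}$ available descriptions $r$); every program pair has $C(p,q)\ge j+\eps$ (only $<2^{j+3\eps}$ pairs $(y,z)$ can fail this for a fixed $x$, since a pair $p,q$ with small $C(p,q)$ pins $y$ down to $2^{\eps}$ values through $x$ and then $z$ to $2^{\eps}$ values through $y$); and $C(y|x)\ge m-O(1)$. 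Each clause holds for most triples by the averaging arguments already used in Theorems~\ref{th1} and~\ref{th11}, and adjusting the additive constants keeps the union of the bad sets a fraction below one, so a good triple exists.

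Finally I would assemble the inequalities that make the construction go through: the diagonalization bound $j+3\eps\le m+k$, feasibility of the simplifications $j-l\ge 4\eps+O(1)$, the requirement $n\ge l+\eps$ that lets $x$ carry the $l+\eps$ bits needed to compress each pair from $C(p,q)\ge j+\eps$ down to $j-l$, the condition $j\ge m+l-\eps$ guaranteeing that simplifying does not push $C(y|x)$ below $m$, and a bound on the total number of breaks (hence of simplifications per $x$) so that the $2^{j-l}$ descriptions are never exhausted. These are met, for instance, by $m=9\eps+O(1)$, $k=j=9\eps+l+O(1)$, and a suitable $n$ in $[\,l+\eps,\,9\eps+l\,]$, giving maximal length $9\eps+l+O(1)$. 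The main obstacle, as in Theorem~\ref{th1}, is this combinatorial lemma together with the bookkeeping of the repeated breaks: one must force $C(p,q)$ up to $j+\eps$ \emph{and} $C(p,q|x)$ down to $j-l$ for every optimal program pair, which is possible only because the lemma keeps the number of such pairs per candidate tiny, and one must do this again and again as the invariant collapses without ever running out of the $2^{j-l}$ admissible programs $r$ for a given $x$. The genuinely new feature over Theorem~\ref{th1} is the $l$-sized gap between the threshold $j$ (governing $C(z|y)$ and the diagonalization) and the simplification length $j-l$ (governing $C(p,q|x)$); keeping that gap consistent with the minimum disclosure being only $l$ is exactly what forces the relation $j=m+l$.
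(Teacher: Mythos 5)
Your overall architecture is the paper's own: enumerate the upper graph of $C$, keep a candidate triple for every $\eps$, maintain a diagonalization item ($C(p,q)\ge j+\eps$ for all program pairs) and a simplification item ($C_A(p,q|x)\le j-l$), re-select candidates via a counting lemma, and close with bookkeeping over the number of breaks. The genuine gap is exactly the item you advertise as unnecessary. The paper's invariant has a third clause, $C(y,z|x)\ge k-l$, maintained \emph{dynamically}, whose violations (fewer than $2^{n+k-l}$ of them) are counted among the breaks. You replace it by the clause ``$C(y|x)\ge m-O(1)$'' inside the counting lemma, i.e.\ a condition checked only once, at selection time. But at selection time you can only test the \emph{current approximation} of $C(y|x)$ supplied by the enumeration; the true value may drop below $m-O(1)$ afterwards, and such a drop need not break either of your two invariant items. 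Concretely, the short string $p^*$ with $U(p^*,x)=y$ and $|p^*|\ll m$ witnessing the drop may already be a ``program for $y$'' at selection time (the enumeration can reveal $C(y|p^*,x)<\eps$ and $C(p^*)<m+\eps$ long before it reveals $C(y|x)\le |p^*|+O(1)$); then every pair containing $p^*$ is simplified at selection, no new program ever appears, your invariant holds forever --- yet the final triple has $C(y|x)=O(1)$, so $C(y,z|x)$ can be far below $m$ and $C(z|y)-C(y,z|x)$ can be of order $k$ rather than $l$, killing the second conclusion. A lower bound on $C(y,z|x)$ (or on $C(y|x)$) must itself be a breakable invariant item, as in the paper. (The one indirect rescue is also closed off by your parameters: applying item (1) to the pair of minimal descriptions of $y$ given $x$ and of $z$ given $y$ only yields $C(y|x)\ge j-k+\eps-O(\log n)$, which is $\approx\eps$ under your choice $j=k$.)

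Your parameter relations are moreover inconsistent with the bookkeeping you yourself list. Setting $j=m+l$ makes the simplification length $j-l=m$. Breaks caused by the appearance of a new program for $y$ alone number up to $2^{n+m+2\eps}$; each break forces a re-selection and up to $2^{4\eps+O(1)}$ fresh simplifications, so the total is up to $2^{n+m+6\eps+O(1)}$, while the counting lemma needs this total to stay below roughly $2^{n+(j-l)}=2^{n+m}$ --- impossible for large $\eps$, whatever $n$ is. One needs $j-l\ge m+6\eps+O(1)$ (the paper takes $j-l-m=6\eps+13$), and then the second conclusion of the theorem requires $k\le m+l+O(1)$ to hold \emph{separately} from $j$; so the relation you describe as forced, $k=j=m+l$, is precisely the one that cannot hold. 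With the third invariant item restored and parameters decoupled as in the paper ($m=7\eps+O(1)$, $n=k=9\eps+l+O(1)$, $j=13\eps+l+O(1)$), the argument goes through.
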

\begin{proof}
The proof is similar to that of Theorem~\ref{th1} but 
it is more complicated because we have to
ensure the upper bound for $C(z|y)-C(y,z|x)$ and because  $y$ and $z$
are not symmetric any more.
As before we will denote the lengths of $x,y,z$ by 
$n,m,k$, which will be chosen later. Besides we will need the anticipated upper bound for $C(p,q|x)$, which will be denoted by 
$j-l$ and will be chosen later as well.

Again we enumerate the upper graph of conditional Kolmogorov 
complexity and define a computable function
$A(u,v)$ by enumerating its graph. 
Again at each step of this enumeration we will have a candidate triple $x,y,z$. For a fixed candidate triple 
 $x,y,z$ we call $p$
\emph{a program for $y$} if  $C(y|p,x)<\eps$ and $C(p)<m+\eps$,
and we call $q$ \emph{a program for $z$} if  $C(z|q,y)<\eps$ and $C(q)<k+\eps$.

We keep the following invariant:
\begin{enumerate}
\item $C(y,z|x)\ge k-l$,
\item for every pair $p,q$ of programs for $y,z$, respectively, it holds that  $C(p,q)\ge j+\eps$,
\item for every pair $p,q$ of programs for $y,z$, respectively, it holds that $C_A(p,q|x,l)\le j-l$.
\end{enumerate}
As before, the invariant guarantees the statement of the theorem 
for the last candidate. 
And again, initially the candidate triple is the lex first triple and each time
the invariant breaks we change the candidate triple.
The invariant can break by any of the following reasons:
\begin{itemize}
\item The inequality $C(y,z|x)< k-l$ becomes true. This can happen less than  $2^{n+k-l}$ times.
\item A new pair $p,q$ with $C(p,q)< j+\eps$ appears.  This can happen less than  $2^{j+\eps}$ times.
\item A new $p$ with $C(p)< m+\eps$
or a new $q$ with $C(q)< k+\eps$ appears. This can happen less than  $2^{\max\{m,k\}+\eps}$ times.
\item For some  $p$ with $C(p)< m+\eps$ the inequality 
$C(y|p,x)<\eps$ becomes true or for some  $q$ with $C(q)< k+\eps$ the inequality
$C(z|q,y)<\eps$ becomes true. This can happen less than $2^{n+m+2\eps}+2^{m+k+2\eps}$ times.
\end{itemize}
After each change of the candidate triple 
we let  $C_A(p,q|x,l)\le j-l$ for all pairs $p,q$, where $p,q$ are programs for
$y,z$, respectively.  
We call this operation \emph{simplifying  $p,q$ with respect to $x$}.  The new triple will be chosen so that
the number of such pairs $p,q$ be less than $2^{j-l-1}$.
More specifically,  the new candidate triple  $x,y,z$ so that the following hold:
 \begin{itemize}
\item the total number of simplifications with respect to 
$x$ made so far is less than $2^{j-l-1}$,  
\item $C(y,z|x)\ge k-l$, 
\item for every pair of programs $p,q$ for  $y,z$, respectively, it holds $C(p,q)\ge j+\eps$, 
\item the number of programs for  $y$ is less than $2^{2\eps+3}$, 
\item the number of programs for   $z$ is less than $2^{2\eps+3}$.
\end{itemize}

Assume that the total number of simplifications made so far is less than
$2^{n+j-l-4}$. Assume further that $\max\{k-l,j+3\eps\}\le m+k-3$. 
Then each of these five conditions holds for more than $7/8$ of all triples $x,y,z$.

Indeed, the first assumption implies that on average, with respect to one $x$ less than  $2^{j-l-4}$ simplifications have been made.
Thus there are less than one eighth of $x$'s with respect to which $2^{j-l-1}$ (or more) simplifications have been made.

Similarly, the second assumption implies that for every $x$ there are less than one eighth of pairs  $y,z$ with $C(y,z|x)< k-l$. 

For every pair  $p,q$ and for every $x$ there are less than  $2^{\eps}$ $y$'s for which $p$
is a program. For every of such $y$'s there are less than $2^{\eps}$  $z$'s for which $q$
is a program. Therefore, for every pair $p,q$ and for every  $x$ there are less than $2^{2\eps}$
 pairs $y,z$ such that  $p,q$ are programs for $y,z$, respectively. Hence for every $x$
there are less than $2^{j+\eps+2\eps}$ pairs $y,z$ such that there is a pair $p,q$ of complexity less than 
$j+\eps$ where $p$ is a program for $y$ and $q$ is a program for $z$. The second assumption
implies that there are less than one eighth of such pairs $y,z$.
 
Finally, as shown in the last paragraph, for every pair
 $x,p$ there are less than 
$2^{\eps}$  $y$'s for which $p$ is a program. Therefore, for every $x$
on average one $y$ has less than  $2^{m+\eps+\eps}/2^m$ programs.
Hence for every 
$x$ less than one eighth of 
 $y$'s have $2^{2\eps+3}$ (or more) programs.
Similarly, for every pair  $x,y$ less than one eighth of 
$z$'s have $2^{2\eps+3}$  (or more) programs.
 
Thus more than 3/8 of triples $x,y,z$ satisfy all five conditions.
We choose one of them and simplify
all pairs $p,q$ where $p,q$ are programs for $y,z$, respectively.  
These simplifications are indeed possible provided  $2^{2\eps+3}\cdot 2^{2\eps+3}\le 2^{j-l-1}$.

Now we have to show by induction that each time the invariant breaks
the assumptions that ensure the existence of a new candidate triple are met.
The total number of changes of the candidate triple is less than 
 $$
2^{n+k-l}+2^{j+\eps}+2^{\max\{m,k\}+\eps}+2^{n+m+2\eps}+2^{m+k+2\eps}<2^{n+k-l}+2^{j+\eps}+2^{n+m+2\eps}+2^{m+k+2\eps+1}.
 $$
To satisfy the upper bound for the total number of simplifications
it suffices to ensure that all four terms in the right hand side of this inequality 
multiplied by  $2^{4\eps+6}$ be less than $2^{n+j-l-6}$.
Besides we used the inequalities $\max\{k-l,j+3\eps\}\le m+k-3$
and $2^{4\eps+6}\le 2^{j-l-1}$.
So it suffices to make true the following inequalities:
\begin{align*}
\max\{n+k-l, j+\eps,n+m+2\eps,m+k+2\eps+1\}&\le n+j-l-4\eps-12,\\
\max\{k-l,j+3\eps\}&\le m+k-3\\
4\eps+6&\le j-l-1
\end{align*}
It is easy to verify that the following values of parameters satisfy these inequalities:
$$
n=k=9\eps+l+16,\quad m =  7\eps+15, \quad j=13\eps+l+28.\qed
$$
\renewcommand{\qed}{}
\end{proof}

\section{Open questions}

1) It is interesting to investigate  the case when
the eavesdropper has some a priori information which is represented by a string  $s$.
For the network of  Fig.~\ref{f7}(b), this question was studied  in~\cite{much} for the case when the eavesdropper needs
information about the string $y$. More specifically, in~\cite{much} for given $x,y,s$ the minimum $I(p:y|s)=C(y|s)-C(y|p,s)$ over
all $p$'s with $U(p,x)=y$ was considered (where $U$ stands the optimal programming language).
Obviously $I(p:y|s)$ cannot be less than $C(y|s)-C(x|s)-O(1)$.
In~\cite{much}, it was shown that this minimum is equal to  $\max\{C(y|s)-C(x|s),0\}+O(\log n)$ where $n=\max\{|x|,|y|,|s|\}$. 
However the minimum value was attained for $p$ of length exponential in $n$ and it remains open whether 
it can be attained for $p$ of polynomial length. Besides, in  \cite{much} it was proved that
the minimum can be attained for $p$ of minimal possible  length $C(y|x)+O(\log n)$ in the case when $C(x|s)\ge C(y|s)+C(y|x)+O(\log n)$ 
(in this case the minimum is of order $O(\log n)$).

2) For the network of  Fig.~\ref{f7}(b), it is interesting to study even a more general problem: the eavesdropper 
has some a priori information $s$ and needs 
information about a fourth string $t$ (that can differ from 
$y$,  $x$ and $[x,y]$).

\end{document}